\def\RR{\mathbb{R}}
\def\EE{\mathbb{E}}
\def\R2{\mathcal{R}}
\newcommand{\abs}[1]{\left\vert#1\right\vert}
\DeclareMathOperator{\diag}{diag}
\DeclareMathOperator{\sech}{sech}
\newcommand{\A}{\mathcal{A}}
\newcommand{\N}{\mathcal{N}}
\newcommand{\F}{\mathcal{F}}
\newcommand{\Xc}{\mathcal{X}}
\newcommand{\y}{{\boldsymbol y}}
\newcommand{\X}{{\boldsymbol X}}
\renewcommand{\R}{{\boldsymbol R}}
\renewcommand{\S}{{\boldsymbol S}}
\renewcommand{\A}{{\boldsymbol A}}
\newcommand{\z}{{\boldsymbol z}}
\newcommand{\x}{{\boldsymbol x}}
\renewcommand{\b}{{\boldsymbol b}}
\renewcommand{\b}{{\boldsymbol b}}
\newtheorem{thm}{Theorem}[section]
\newtheorem{cor}[thm]{Corollary}
\newtheorem{lem}[thm]{Lemma}
\newtheorem{claim}[thm]{Claim}
\newtheorem{definition}[thm]{Definition}
\begin{document}

\title{Large-System Analysis of Multiuser Detection with an Unknown Number of Users: A High-SNR Approach}

\author{Adri\`a Tauste Campo, Albert Guill\'en i F\`abregas
\thanks{A. Tauste Campo and A. Guill\'en i F\`abregas
are with the Department of Engineering, University of Cambridge,
Trumpington Street, Cambridge CB2 1PZ, UK, e-mail: {\tt
adria.tauste@eng.cam.ac.uk, albert.guillen@eng.cam.ac.uk}}and Ezio
Biglieri
\thanks{Ezio Biglieri
is with Universitat Pompeu Fabra, c/ Roc Boronat 138, E-08018
Barcelona, Spain, e-mail: {\tt ezio.biglieri@upf.edu}. His work was
supported by the Spanish Ministery of Education and Science under
Project CONSOLIDER-INGENIO 2010 CSD2008-00010 COMONSENS.} }

\maketitle

\begin{abstract}
We analyze multiuser detection under the assumption that the
number of users accessing the channel is unknown by the receiver.
In this environment, users' activity must be estimated along with
any other parameters such as data, power, and location. Our main
goal is to determine the performance loss caused by the need for
estimating the identities of active users, which are not known a
priori. To prevent a loss of optimality, we assume that identities
and data are estimated jointly, rather than in two separate steps.
We examine the performance of multiuser detectors when the number
of potential users is large. Statistical-physics methodologies are
used to determine the macroscopic performance of the
detector in terms of its multiuser efficiency. Special attention is paid to the fixed-point
equation whose solution yields the multiuser efficiency of the
optimal (maximum a posteriori) detector in the large
signal-to-noise ratio regime. Our analysis yields closed-form
approximate bounds to the minimum mean-squared error in this
regime. These illustrate the set of solutions of the fixed-point
equation, and their relationship with the maximum system load.
Next, we study the maximum load that the detector can support for
a given quality of service (specified by error probability).
\end{abstract}

\newpage
\section{Introduction}
In multiple-access communication, the evolution of user activity
may play an important role. From one time instant to the next,
some new users may become active and some existing users inactive,
while parameters of the persisting users, such as power or
location, may vary. Now, most of the available multiuser detection
(MUD) theory is based on the assumption that the number of active
users is constant, known at the receiver, and equal to the maximum
number of users entitled to access the system~\cite{Ver98}. If
this assumption does not hold, the receiver may exhibit a
serious performance loss~\cite{Half96,Poor98}. In~\cite{Big1}, the
more realistic scenario in which the number of active users is
unknown a priori, and varies with time with known statistics, is
the basis of a new approach to detector design. This work
presents a 
large-system analysis of this new type of detectors for Code
Division Multiple Access (CDMA).

Our main goal is to determine the performance loss caused by the
need for estimating the identities of active users, which are not
known a priori. In this paper we restrict our analysis to a
worst-case scenario, where detection cannot improve the performance
from past experience due to a degeneration of the activity model
(for instance, assuming a Markovian evolution of the number of
active users \cite{Osk02,Chen01}) into an independent process
\cite{Tau082}. The same analysis applies to systems where the input
symbols accounting for data and activity are interleaved before
detection. To prevent a loss of optimality, we assume that
identities and data are estimated jointly, rather than in two
separate steps. Our interest is in randomly spread CDMA system in
terms of multiuser efficiency, whose natural dimensions
(number of users $K$, and spreading gain $N$) tend to infinity,
while their ratio (the ``system load'') is kept fixed. In
particular, we consider the optimal maximum a posteriori (MAP)
multiuser detector, and use tools recently adopted from statistical
physics~\cite{Tan02,Guo05,Mont08,Nis92,Tan08, Guo09}. Of special relevance
in our analysis is the decoupling principle introduced
in~\cite{Guo05} for randomly spread CDMA. The general results
derived from  asymptotic analysis are validated by simulations run
for a limited number of users~\cite{Tan08}.

The results of this paper focus on the degradation of multiuser
efficiency when the uncertainty on the activity of the users grows
and the SNR is sufficiently large. We go one step beyond the
application of the large-system decoupling principle \cite{Tan02,
Guo05} and provide a new high-SNR analysis on the space of
fixed-point solutions showing explicitly its interplay with the
system load for a non-uniform ternary and parameter-dependent input
distribution. By expanding the minimum mean square error for large
SNR, we obtain tight closed-form bounds that describe the large CDMA
system as a function of the SNR, the activity factor and the system
load. In addition, some trade-off  results between these quantities are
derived. Of special novelty here is the study of the impact of the
activity factor in the CDMA performance measures (minimum
mean-square error, and multiuser efficiency). In particular, we provide necessary and sufficient conditions
on the existence of single or multiple fixed-point solutions
as a function of the system load and SNR. 
Finally, we analytically identify
the region of  ``meaningful" multiuser efficiency solutions with their associated maximum system loads, and
derive consequences for engineering problems of practical interest.

This paper is organized as follows. Section~\ref{section:model}
introduces the system model and the main notations used
throughout. Section~\ref{section:main} derives the large-system
central fixed-point equation, and analytical bounds to the MMSE.
Based on these results, Section~\ref{section:main2} discusses the
interplay of maximum system load and multiuser efficiency. 
Finally, Section~\ref{section:conclusions} draws some
concluding remarks.

\section{System Model}
\label{section:model}

We consider a CDMA system with an unknown number of
users~\cite{Big1}, and examine the optimum user-and-data detector.
In particular, we study randomly spread direct-sequence (DS) CDMA
with a maximum of $K$ active users:
\begin{equation}\label{eq:1.1}
    \y_t=\S \A \,\b_t + \z_t,
\end{equation}
where $\y_t\in\RR^{N}$ is the received signal at
time $t$, $N$ is the length of the spreading
sequences, $\S\in \RR^{N\times K}$ is the matrix of the sequences,
$\A=\diag(a_1,\dotsc,a_K)\in \RR^{K\times K}$ is the diagonal matrix
of the users' signal amplitudes, $\b_t=(b_t^1,\dots,
b_t^K)\in\RR^{K}$ is the users' data vector, and $\z_t$ is an
additive white Gaussian  noise vector with i.i.d.\ entries $\sim
\N(0,1)$. We define the system's activity
rate as $\alpha \triangleq \Pr\{\text{user $k$ is active}\}$, $1\leq
k\leq K$. Active users employ binary phase-shift keying (BPSK) with
equal probabilities. This scheme is equivalent to one where each
user transmits a ternary constellation $\Xc \triangleq\{-1,0,+1\}$
with probabilities $\Pr\{b_t^k = -1\}=\Pr\{b_t^k =
+1\}=\tfrac{\alpha}{2}$ and $\Pr\{b_t^k = 0\}=1-\alpha$. We define
the  maximum system load as $\beta \triangleq \frac{K}{N}$.

In a static channel model, the detector operation remains invariant
along a data frame, indexed by $t$, but we often omit this time
index for the sake of simplicity. Assuming that the receiver knows
$\S$ and $\A$, the a posteriori probability (APP) of the transmitted
data has the form
\begin{equation}\label{eq:1.2}
p(\b|\y,
\S,\A)=\frac{1}{\sqrt{2\pi}}e^{-\frac{\|\y-\S\A\,\b\|^2}{2}}\,\frac{p(\b)}{p(\y|\S,\A)}.
\end{equation}
Hence, the maximum a posteriori (MAP) joint activity-and-data
multiuser detector solves
\begin{equation}\label{eq:1.3}
 \hat \b= \arg \max_{\b\in \Xc^K} p(\b|\y,\S, \A).
\end{equation}
Similarly, optimum detection of single-user data and activity is
obtained by marginalizing over the undesired users as follows:
\begin{equation}\label{eq:1.4}
 \hat b^k= \arg \max_{b^k} \sum_{\b\setminus b^k}p(\b|\y,\S,\A).
\end{equation}

\subsection{The decoupling principle}
In a communication scheme such as the one modeled
by~\eqref{eq:1.2}, the goal of the multiuser detector is to infer
the information-bearing symbols given the received signal $\y$ and
the knowledge about the channel state. This leads naturally to the
choice of the partition function $Z(\y,\S)=p(\y \mid \S)$. The
corresponding free energy, normalized by the number of users becomes
\cite{Tan02}
\begin{equation}\label{eq:1.8}
\F_K\triangleq-\frac{1}{K}\ln p(\y \mid \S)
\end{equation}
To calculate this expression we make the self-averaging assumption,
which states that the randomness of~\eqref{eq:1.8} vanishes as $K
\to \infty$. This is tantamount to saying that the free energy per
user $\F_K$ converges in probability to its expected value over the
distribution of the random variables $\y$ and $\S$, denoted by
\begin{equation}\label{eq:1.9}
\F\triangleq\lim_{K\rightarrow \infty} \EE\left\{-\frac{1}{K}\ln
p(\y\mid \S)\right\}.
\end{equation}
Evaluation of~\eqref{eq:1.9} is made possible by the \emph{replica
method} \cite{Nis92,Mont08}, which consists of introducing $n$
independent replicas of the input variables, with corresponding
density $p^n(\y|\S)$, and computing $\F$ as follows:
\begin{equation}\label{eq:1.10}
\F=-\lim_{n\to 0} {\partial\over \partial n}\left(
\lim_{K\to\infty}{1\over K}\ln \EE\{p^n(\y\mid \S)\}\right).
\end{equation}

To compute \eqref{eq:1.10}, one of the cornerstones in large
deviation theorem, the Varadhan's theorem~\cite{El85}, is invoked to
transform the calculation of the limiting free energy into a
simplified optimization problem, whose solution is assumed to
exhibit symmetry among its replicas. More specifically, in the case
of a MAP individually optimum detector, the optimization yields a
fixed-point equation, whose unknown is a single operational
macroscopic parameter, which is claimed to be the multiuser
efficiency \footnote{The multiuser efficiency reflects the degradation factor of SNR due to
interference\cite{Ver98}.} of an equivalent Gaussian channel~\cite{Guo05}. Due to
the structure of the optimization problem, the multiuser efficiency
must minimize the free energy. The above is tantamount to
formulating the \emph{decoupling principle}:

\begin{claim}\label{thm0}
\cite{Guo05,Tan08} Given a multiuser channel, the distribution of
the output $\hat b^k$ of the individually optimum (IO) detector,
conditioned on $b^k=b$ being transmitted with amplitude $a$,
converges to the distribution of the posterior mean estimate of
the single-user Gaussian channel
\begin{equation}\label{eq:1.12}
y=\sqrt{\gamma}b^k+\frac{1}{\sqrt{\eta}}z,
\end{equation}
where $z\sim \N(0,1)$,
and $\eta$, the multiuser efficiency, is the solution of the
following fixed-point equation:
\begin{equation}\label{eq:1.13}
\eta^{-1}=1+\beta\EE_{\gamma}\left[\gamma\textrm{MMSE}\left(\eta\gamma,
\alpha\right)\right].
\end{equation}
If~\eqref{eq:1.13} admits more than one solution, we must choose
the one minimizing the free energy function
\begin{equation}\label{eq:1.14}
\F=-\EE\left[\int p(y\mid b^k)\ln p(y\mid b^k) \textrm{d}y\right]
-\frac{1}{2}\ln\frac{2\pi
e}{\eta}+\frac{1}{2\beta}\left({\eta}\ln\frac{2\pi}{\eta}\right).
\end{equation}
In \eqref{eq:1.13}, \eqref{eq:1.14}, $p(y|b^k)$ is the transition
probability of the large-system equivalent single-user Gaussian
channel described by~\eqref{eq:1.12}, and
\begin{equation}\label{eq:1.15}
\textrm{MMSE}(\eta\gamma,\alpha)\triangleq\EE\left[\left(b^k-\hat
b^k\right)^2\right]
\end{equation}
denotes the minimum mean-square error in estimating $b^k$ in
Gaussian noise with amplitude equal to $\sqrt{\gamma}$, where
$\hat b^k=\EE\left[b^k|y\right]$ is the posterior mean estimate,
known to minimize the MMSE~\cite{Poor88}.
\end{claim}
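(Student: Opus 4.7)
The plan is to derive the claim via the replica method already set up in \eqref{eq:1.10}. First, I would compute $\EE\{p^n(\y\mid\S)\}$ by introducing $n$ i.i.d.\ replicas $\b^{(1)},\dots,\b^{(n)}$ of the input vector drawn from the ternary prior, together with a ``true'' replica $\b^{(0)}$ generating $\y$, and integrating the jointly Gaussian likelihoods over $\y$ and over the entries of $\S$. Because the spreading entries are i.i.d.\ with zero mean and variance $1/N$, the integration over $\S$ couples the replicas only through the $(n+1)\times(n+1)$ overlap matrix $Q_{ab}=\tfrac{1}{K}\sum_{k=1}^{K} a_k^2\, b^{(a)}_k b^{(b)}_k$. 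Inserting delta functions pinning these overlaps along with their Lagrange multipliers $\tilde Q_{ab}$, the expectation factorizes into a single-user partition function raised to the $K$-th power, times a determinantal factor coming from the Gaussian noise integration.

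Next, I would invoke Varadhan's theorem, as alluded to in the excerpt, to replace the integral over $Q$ and $\tilde Q$ by its saddle-point value in the limit $K\to\infty$. Under the replica-symmetric ansatz $Q_{aa}=q$ and $Q_{ab}=m$ for $a\neq b$ (and analogously for $\tilde Q$), the saddle-point free energy reduces to a function of only two scalar macroscopic parameters, identifiable with the posterior overlap and the multiuser efficiency $\eta$ of the scalar channel~\eqref{eq:1.12}. Performing the analytic continuation $n\to 0$ and differentiating as prescribed by~\eqref{eq:1.10}, the stationarity conditions of the free energy collapse, after routine algebra, into the fixed-point equation~\eqref{eq:1.13}: the factor $\beta$ enters through the ratio $K/N$, and $\textrm{MMSE}(\eta\gamma,\alpha)$ emerges as the conditional variance of the posterior mean estimate in the decoupled scalar channel under the $\alpha$-parametrized ternary prior, averaged over $\gamma$. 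The decoupling statement itself then follows from the $n\to 0$ factorization of the replicated joint posterior into independent copies of the single-user posterior of~\eqref{eq:1.12}, so that the conditional law of $\hat b^k$ given $b^k=b$ coincides with that of the posterior mean in the scalar channel. When \eqref{eq:1.13} admits several fixed points, each corresponds to a distinct stationary point of the replica-symmetric free energy, and the selection rule \eqref{eq:1.14} simply picks the global minimizer, i.e.\ the thermodynamically dominant phase.

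The main obstacle is the well-known non-rigorous character of the replica trick: the computation is performed for integer $n$ and must then be extended analytically to a neighborhood of zero, an operation that lacks general justification. A secondary difficulty is validating the replica-symmetric ansatz, which should in principle be checked against the de Almeida-Thouless stability criterion; a failure of this criterion would signal the onset of a glassy regime requiring replica-symmetry breaking, a risk that in our ternary-input setting grows with the activity factor $\alpha$ and with the load $\beta$. Finally, the self-averaging statement that carries \eqref{eq:1.8} into \eqref{eq:1.9} must be assumed rather than proved within this framework.
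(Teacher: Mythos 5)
Your sketch follows essentially the same route the paper relies on: Claim~\ref{thm0} is not proved in the paper but imported from \cite{Guo05,Tan08}, and the replica computation you outline (replicated partition function, overlap matrix, Varadhan's theorem for the saddle point, replica-symmetric ansatz, $n\to 0$ continuation, stationarity giving \eqref{eq:1.13}, and free-energy minimization \eqref{eq:1.14} selecting the phase) is precisely the derivation the paper summarizes around \eqref{eq:1.8}--\eqref{eq:1.10} and defers to those references. The caveats you raise (analytic continuation, replica symmetry, self-averaging) coincide with the unproved assumptions the paper itself lists in Section~\ref{subsection:val}, so your proposal is consistent with the paper's treatment.
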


\subsection{A note on the validity of the replica
method}\label{subsection:val}
The replica method is known to accurately approximate experimental data and is consistent with previous theoretical work ~\cite{Tse99,Tse00,Verd02}. The replica
method analysis relies on four unproved assumptions: i) the
self-averaging property of the free energy, ii) the replica symmetry of
the fixed-point solution, iii) the exchange of order of limits and
iv) the analytic continuation of the replica exponent to real values.
Although the validation of the mathematical rigor of these
assumptions is still an unsolved problem, there has been some recent
progress in this direction~\cite{Tal03,Mont06,Guo07,Kor09}.

\section{Large-system multiuser efficiency}
\label{section:main}
We illustrate here the behavior of multiuser efficiency and system
load in the high-SNR region corresponding to detection with an
unknown number of users. We start by shaping our problem into the
statistical-physics framework~\cite{Tan02,Guo05}. As mentioned
earlier, the multiuser detector metric is regarded as the energy of
a system of particles at state $\X $. Therefore, the 
partition function $Z(\X)=\sum_{\x}\exp\left(-{\varepsilon(\x)/
T}\right)$ corresponds to the output density given the channel
information, i.e.,
$p(\y|\S)=(2\pi)^{-1/2}\sum_{\b}p(\b)\exp\left(-\|\y-\S\A\,\b\|^2/2\right)$.

The energy operator $\varepsilon(.)$, as derived from the free
energy, is related to the logarithm of the joint distribution
$p(\y\mid \b,\S)p(\b)$:
\begin{equation}\label{eq:2.2}
    \varepsilon(\b)= \|\y-\S\A\b\|^2-2\ln p(\b)
\end{equation}

We can now invoke the decoupling principle (Claim \ref{thm0}) in the
multiuser system~\eqref{eq:1.1}, so as to use its single-user
characterization. By doing this, the system's performance can be characterized 
by that of a bank of $K$ 
scalar Gaussian channels~\eqref{eq:1.12}, where $K$ represents the
maximum number of users. The input distribution for an arbitrary
BPSK user $k$ takes values $\Xc=\{-1,0,+1\}$ with probabilities
$\frac{\alpha}{2},1-\alpha$ and $\frac{\alpha}{2}$, respectively,
the signal amplitudes from matrix $\A$ are assumed to be constant, i.e, $a_k=\sqrt{\gamma}$ $\forall k$, where
$\gamma$ is the SNR per active user (referred to as SNR), and the
inverse noise variance is equal to the multiuser efficiency $\eta$.
Hence, $\eta$ is the solution of the fixed-point
equation~\eqref{eq:1.13} that minimizes~\eqref{eq:1.14}, where the
MMSE is given by~\eqref{eq:1.15}. More generally,
the analysis presented in this paper can be easily extended to
$a_k$ coefficients with different statistics, like for example
those induced by Rayleigh fading.

By applying Claim \ref{thm0} \cite{Guo05} which holds under the
assumptions of the replica method, the fixed-point equation of the
user-and-data detector can be stated as follows:

\begin{cor}\label{thm1}
Given a randomly spread DS-CDMA system with constant equal power per
user, the large-system multiuser efficiency of an individually
optimum detector that performs MAP estimation of users' identities
and their data under BPSK transmission is the solution of the
following fixed-point equation
\begin{equation}\label{eq:2.3}
\eta=\frac{1}{1+\beta\left(\gamma\left[\alpha-\int\frac{1}{\sqrt{2\pi}}e^{\frac{-y^2}{2}}\frac{\alpha^2\sinh(\eta
\gamma-y\sqrt{\eta\gamma})}{\alpha\cosh(\eta\gamma-y\sqrt{\eta
\gamma})+(1-\alpha)e^{\eta\frac{\gamma}{2}}}\textrm{d}y\right]\right)}
\end{equation}
that minimizes the free energy \eqref{eq:1.14}.
\end{cor}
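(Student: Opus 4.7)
The plan is to deduce \eqref{eq:2.3} as a direct specialization of Claim~\ref{thm0} to the ternary input alphabet $\Xc=\{-1,0,+1\}$ with probabilities $\alpha/2,1-\alpha,\alpha/2$ and constant amplitude $a_k=\sqrt{\gamma}$. Since the amplitude is deterministic, the expectation $\EE_\gamma$ in \eqref{eq:1.13} collapses and the fixed-point equation simplifies to $\eta^{-1}=1+\beta\gamma\,\textrm{MMSE}(\eta\gamma,\alpha)$. Hence the entire proof reduces to evaluating the scalar MMSE in the equivalent single-user Gaussian channel \eqref{eq:1.12} with the prescribed ternary prior.

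First, I would compute the posterior mean $\hat b^k=\EE[b^k\mid y]$ via Bayes' rule. The three Gaussian likelihoods $p(y\mid b^k=\pm1)$ and $p(y\mid b^k=0)$ share a common factor $e^{-\eta y^2/2}$, which cancels between numerator and denominator. Collecting the antisymmetric $b=\pm1$ terms into a hyperbolic sine and the symmetric ones into a hyperbolic cosine yields
\begin{equation*}
\hat b^k(y)=\frac{\alpha\sinh(\eta\sqrt{\gamma}\,y)}{\alpha\cosh(\eta\sqrt{\gamma}\,y)+(1-\alpha)e^{\eta\gamma/2}},
\end{equation*}
after multiplying top and bottom by $e^{\eta\gamma/2}$.

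Next, I would exploit the orthogonality identity $\textrm{MMSE}=\EE[(b^k)^2]-\EE[\hat b^k{}^2]=\EE[(b^k)^2]-\EE[b^k\hat b^k]$. The prior gives $\EE[(b^k)^2]=\alpha$, accounting for the leading $\alpha$ in \eqref{eq:2.3}. For the second term I would condition on the transmitted symbol: the $b^k=0$ contribution vanishes, and by the $b\mapsto -b$ symmetry of $\hat b^k(y)$ the $b^k=\pm1$ contributions coincide, giving $\EE[b^k\hat b^k]=\alpha\,\EE[\hat b^k\mid b^k=1]$. Substituting $y=\sqrt{\gamma}+z/\sqrt{\eta}$ with $z\sim\N(0,1)$ makes the argument of the hyperbolic functions equal to $\eta\gamma+z\sqrt{\eta\gamma}$; the change of variable $y\mapsto-z$ (allowed because the Gaussian density is even) recasts this as $\eta\gamma-y\sqrt{\eta\gamma}$, matching \eqref{eq:2.3} exactly.

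The only non-routine step is the algebraic reorganization of the posterior ratio into $\sinh/\cosh$ form and keeping the normalizing factor $e^{\eta\gamma/2}$ in the right place; everything else is a straightforward bookkeeping exercise. Finally, substituting the resulting expression for $\textrm{MMSE}(\eta\gamma,\alpha)$ back into $\eta^{-1}=1+\beta\gamma\,\textrm{MMSE}(\eta\gamma,\alpha)$ and inverting yields \eqref{eq:2.3}. The selection of the free-energy-minimizing solution is inherited directly from Claim~\ref{thm0} and requires no additional argument at this stage.
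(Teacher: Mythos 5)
Your proposal is correct and follows essentially the same route as the paper's Appendix~A: specialize the decoupled fixed-point equation of Claim~\ref{thm0} to the ternary prior with constant amplitude $\sqrt{\gamma}$, compute the scalar MMSE of the equivalent Gaussian channel explicitly, and substitute back, arriving at the same integral expression as \eqref{eq:2.3}. The only cosmetic difference is that you evaluate $\EE[b^k\hat b^k]$ by conditioning on $b^k=1$ and invoking the odd symmetry of the posterior mean, whereas the paper computes $\EE\bigl[(\hat b^k)^2\bigr]$ directly as an integral over the output density and symmetrizes via a change of variables; the two are identical by the orthogonality property of the posterior-mean estimator, so no genuine gap arises.
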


\begin{proof}
See Appendix \ref{app:proof_mmse}.
\end{proof}

\begin{figure} [!htbp]
         \begin{center}
         \includegraphics[width=0.9\columnwidth]{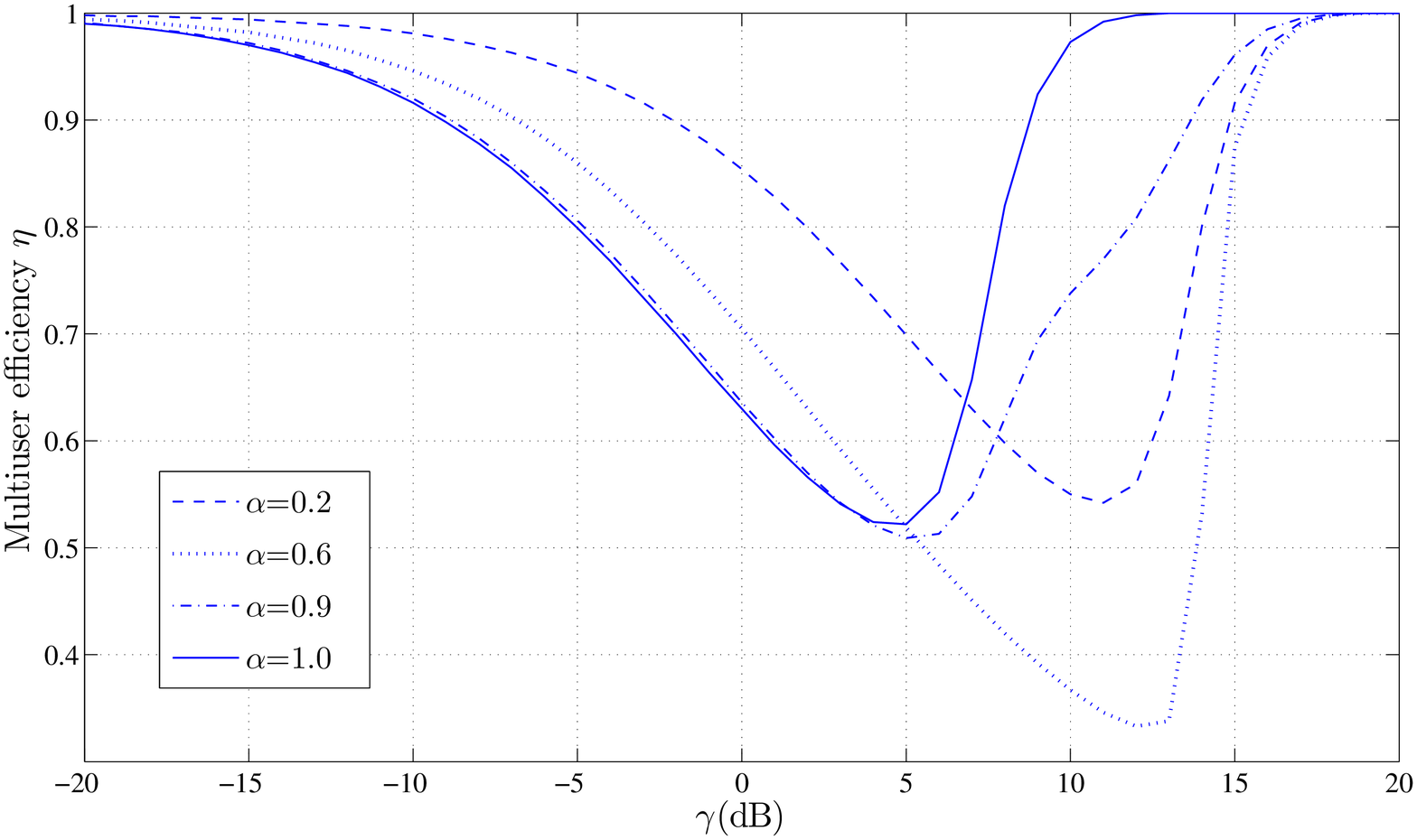}
          \caption{{Large-system multiuser efficiency of the user-and-data detector under MAP with prior knowledge of $\alpha$ and $\beta$=3/7.}}
          \label{fig1}
            \end{center}
                       \vspace{-7mm}

\end{figure}

Our approach differs from that in~\cite{Guo05,Mull03,Tan02}, as the
fixed-point equation~\eqref{eq:2.3} also includes the prior
distribution on the users' activity in a static channel. Under MAP
estimation, detection requires the knowledge not only of the prior
information of the data, but also of the activity rate $\alpha$.
Thus, the fixed-point equations depend on MMSE, SNR, and system
load. Numerical solutions vs.\ SNR at a load $\beta=3/7$ are shown
in Fig.~\ref{fig1}. Plots like this one illustrate how the multiuser
efficiency is affected by the level of noise and interference, and
by the uncertainty in the users' activity rate. For low SNR, noise
dominates, and the performance of the MMSE and the multiuser
efficiency is degraded as $\alpha$ grows, since the presence of more
active users adds more noise to the system. On the other hand, as we
shall discuss later, for high SNR the MMSE strongly depends on the
minimum distance between the transmitted symbols, and the activity
rate here plays a secondary role. Hence, the gap between the
multiuser efficiencies with $\alpha=1$ and $\alpha\ne 1 $ for larger
SNR is due to the fact that the former constellation has twice the
minimum distance of the latter. We can observe clearly the
transition behavior from low to high SNR for values of $\alpha$
approaching $1$. Moreover, when $\alpha=1$, \eqref{eq:2.3} reduces
to the fixed-point equation for the classical assumption, in which all
users are active and transmit a binary antipodal
constellation~\cite{Guo05}:
\begin{equation}\label{eq:2.4}
\eta^{-1}=1+\beta\left(\gamma\left[1-\int\frac{1}{\sqrt{2\pi}}e^{-y^2/2}\tanh(\eta
\gamma-y\sqrt{\eta\gamma})\textrm{d}y\right]\right).
\end{equation}
In this case, it can be shown that, for high SNR, we have
$\textrm{MMSE}(\eta\gamma,\alpha=1)\approx
\sqrt{\frac{2\pi}{\eta\gamma}}e^{-\eta\gamma/2}$. In fact, the
following general result holds:
\begin{lem}\label{lem1}
\cite{Loz06} For large output SNR, the MMSE of a system
transmitting an equiprobable $M$-ary normalized constellation with
minimum Euclidean distance $d$ in a Gaussian channel with noise
variance $1/\eta$ is
\begin{equation}\label{eq:2.5}
\textrm{MMSE}(\eta\gamma,\alpha=1)
=\kappa(\eta\gamma)e^{-d^2\eta\gamma/8}
\end{equation}
with $\kappa_1(\eta\gamma)\leq \kappa(\eta\gamma)\leq \kappa_2$,
where $\kappa_1(\eta\gamma)=\mathcal{O}(1/\sqrt{\eta\gamma})$ and
$\kappa_2$ is a constant, given by the maximum distance between
neighboring symbols.
\end{lem}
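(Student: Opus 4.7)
The plan is to exploit that at high SNR the posterior of $b^k$ on the equivalent scalar channel $u=\sqrt{\eta\gamma}\,b^k+z$, with $z\sim\mathcal{N}(0,1)$, concentrates sharply on the transmitted constellation point, so that the MMSE is driven by nearest-neighbor confusion events. Denoting the constellation by $\{b_1,\ldots,b_M\}$ and writing the posterior mean
\begin{equation*}
\hat b^k(u)=\frac{\sum_{j=1}^{M}b_j\,e^{-(u-\sqrt{\eta\gamma}\,b_j)^2/2}}{\sum_{j=1}^{M}e^{-(u-\sqrt{\eta\gamma}\,b_j)^2/2}},
\end{equation*}
the MMSE decomposes as $\textrm{MMSE}(\eta\gamma,1)=\tfrac{1}{M}\sum_{i}\int(b_i-\hat b^k(u))^2\,\phi(u-\sqrt{\eta\gamma}\,b_i)\,du$ with $\phi$ the standard Gaussian density, and the whole argument reduces to extracting the leading term of each such integral as $\eta\gamma\to\infty$.

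For the upper bound I would use the pointwise estimate $(b^k-\hat b^k)^2\le d_{\max}^2$, which dominates the MMSE by $d_{\max}^2$ times the ML decision-error probability. A union bound over the $\binom{M}{2}$ pairs together with the Gaussian tail $Q(x)\le\tfrac{1}{2}e^{-x^2/2}$ gives a sum of terms $\tfrac{1}{2}e^{-\eta\gamma|b_i-b_j|^2/8}$, all bounded by the nearest-neighbor contribution $\tfrac{1}{2}e^{-\eta\gamma d^2/8}$. This yields $\kappa(\eta\gamma)\le\kappa_2$ with a constant depending only on $M$ and $d_{\max}$.

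For the lower bound I would isolate a single pair of nearest neighbors $(b_i,b_j)$ at distance $d$, fix the transmitted symbol to $b_i$, and restrict the integral defining the MMSE to a slab of order one around the variable $v=u-\sqrt{\eta\gamma}\,b_i$. On this slab every other constellation point contributes a likelihood ratio exponentially smaller than the retained pair, so the full posterior reduces to its binary counterpart up to a factor $1+o(1)$. The surviving integrand is
\begin{equation*}
\frac{d^2}{(1+\exp(\sqrt{\eta\gamma}\,dv+\eta\gamma d^2/2))^2}\,\phi(v),
\end{equation*}
and since the sigmoid-type factor acts as an essentially sharp cutoff at $v=-\tfrac{\sqrt{\eta\gamma}}{2}d$, the integral evaluates asymptotically to $d^2\,Q(\sqrt{\eta\gamma}\,d/2)$. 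The Mills-ratio lower bound $Q(x)\ge\tfrac{x}{(1+x^2)\sqrt{2\pi}}e^{-x^2/2}$ then gives a leading term of order $(\eta\gamma)^{-1/2}e^{-\eta\gamma d^2/8}$, whence $\kappa(\eta\gamma)\ge\kappa_1(\eta\gamma)=\Omega(1/\sqrt{\eta\gamma})$.

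The main obstacle is the lower bound: one must rigorously verify that the non-nearest-neighbor terms in the posterior do not perturb the posterior mean enough to destroy the polynomial prefactor, and that the slab truncation only costs a $1+o(1)$ factor. Both checks follow from a uniform Laplace-method argument on the slab, exploiting that for each non-neighbor symbol $b_k$ the product $(b_k-b_i)(b_k-b_j)\ge d^2$, which yields a strict exponential gap between the retained and discarded contributions.
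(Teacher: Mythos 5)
You should first note that the paper itself offers no proof of Lemma~\ref{lem1}: it is imported from \cite{Loz06}, and the closest in-house computation is Appendix~\ref{app:proof_mmse_limits}, where the ternary case of Theorem~\ref{thm2} is handled by exact manipulation of the Gaussian integrals ($\sech$ expansions plus the $Q$-function series \eqref{ap1}). Your nearest-neighbour argument is therefore a genuinely different and more generic route, and its architecture is sound: a pairwise union bound with the Gaussian tail gives a constant upper envelope $\kappa_2$ determined by $M$ and the largest symbol spacing, while reduction of the posterior to the dominant nearest-neighbour pair gives the $1/\sqrt{\eta\gamma}$ lower envelope; your observation that $(b_k-b_i)(b_k-b_j)\ge d^2$ for every non-neighbour $b_k$ is exactly what guarantees the exponential likelihood gap that legitimizes the binary reduction near the relevant decision boundary.

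Two steps need tightening, though neither is fatal. First, the upper bound as written is a non sequitur: the pointwise estimate $(b^k-\hat b^k)^2\le d_{\max}^2$ does not by itself yield ``$d_{\max}^2$ times the ML error probability,'' because the posterior-mean error is not zero on the correct-decision region. The clean fix is to invoke optimality of the conditional mean: MMSE is upper-bounded by the MSE of the hard ML decision, whose squared error vanishes on correct decisions and is at most $d_{\max}^2$ otherwise, after which the union bound $P_e\le (M-1)Q\bigl(\sqrt{\eta\gamma}\,d/2\bigr)\le \tfrac{M-1}{2}e^{-\eta\gamma d^2/8}$ delivers $\kappa_2$. Second, in the lower bound the dominant region is not a slab of width $O(1)$ around the transmitted point but a neighbourhood of the decision boundary $v\approx\pm\sqrt{\eta\gamma}\,d/2$; moreover the logistic factor's transition width, of order $1/(d\sqrt{\eta\gamma})$, is comparable to the decay scale of the Gaussian tail, so it is not a sharp cutoff and the binary integral is a constant multiple of $d^2Q\bigl(\sqrt{\eta\gamma}\,d/2\bigr)$ rather than asymptotically equal to it (a change of variables centred at the boundary turns it into a convergent Beta-type integral times $e^{-\eta\gamma d^2/8}/\sqrt{\eta\gamma}$). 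Since the lemma only claims $\kappa_1(\eta\gamma)$ of order $1/\sqrt{\eta\gamma}$, these constant-level corrections do not change the conclusion, and after the two repairs your sketch constitutes a valid, self-contained proof of the cited result.
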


For the entire range of activity rates, i.e., $\alpha\in [0,1]$, we
can derive  lower and  upper bounds illustrating analytically the
transition between the classical assumption ($\alpha=1$) and the
cases where the activity is also detected ($\alpha < 1$) for large SNR.
Our calculations bring about a new analytical framework to deal
with large-system analysis, as we will see in the next section. Our
bounds are consistent with Lemma~\ref{lem1} and the lower bound
includes the case $\alpha=1$. The general result is stated as
follows.
\begin{thm}\label{thm2}
The MMSE of joint user identification and data detection in a large
system with an unknown number of users has the following behavior,
valid for sufficiently large values of the product $\eta\gamma$:
\begin{equation}\label{eq:2.6}
2\sqrt{\frac{\alpha(1-\alpha)} {{\pi
\eta\gamma}}} e^{-\eta\gamma/8}\leq\textrm{MMSE}(\eta\gamma,\alpha)\leq
2\alpha e^{-\eta\gamma/2}+
\sqrt{\frac{\pi\alpha(1-\alpha)}
{\eta\gamma}}e^{-\eta\gamma/8}
\end{equation}
\end{thm}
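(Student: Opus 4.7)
My starting point is the conditional-variance identity
\[
\mathrm{MMSE}(\eta\gamma,\alpha) = \EE[\mathrm{Var}(b^k|y)] = 4\,\EE[q_+ q_-] + \EE[q_0(q_+ + q_-)],
\]
where $q_b(y) = \Pr(b^k = b \mid y)$ are the posteriors on the decoupled single-user channel~\eqref{eq:1.12}; the identity follows from $\EE[(b^k)^2|y]=q_++q_-$, $\hat b^k=q_+-q_-$, and $q_++q_-+q_0=1$. The two expectations isolate the two minimum-distance error events of the ternary alphabet: confusion between $\pm 1$ (Euclidean distance $2\sqrt\gamma$) and confusion between $0$ and $\pm 1$ (distance $\sqrt\gamma$), which by Lemma~\ref{lem1} drive the $e^{-\eta\gamma/2}$ and $e^{-\eta\gamma/8}$ exponentials in the theorem.

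\textbf{Upper bound.} Since $q_+q_-\in[0,1]$ gives $q_+q_-\le\sqrt{q_+q_-}$, and $q_+q_- = (\alpha/2)^2 p(y|{+1})p(y|{-1})/p(y)^2$, the Gaussian identity $\sqrt{p(y|{+1})p(y|{-1})} = p(y|0)\,e^{-\eta\gamma/2}$ collapses the expectation to $\EE[\sqrt{q_+q_-}] = (\alpha/2)\,e^{-\eta\gamma/2}$, so that $4\EE[q_+q_-]\le 2\alpha e^{-\eta\gamma/2}$. For the second term I use the symmetry $\EE[q_0(q_++q_-)]=2\EE[q_0q_+]$, drop the non-negative $(\alpha/2)p(y|{-1})$ from $p(y)$, and translate to the local coordinate $x=\sqrt\eta(y-\sqrt\gamma/2)$ at which $p(y|0)$ and $p(y|{+1})$ cross; the integrand becomes a Gaussian factor $e^{-x^2/2}e^{-\eta\gamma/8}$ divided by $h(x) = (1-\alpha)e^{-x\sqrt{\eta\gamma}/2}+(\alpha/2)e^{x\sqrt{\eta\gamma}/2}$. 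Using $e^{-x^2/2}\le 1$ and the rescaling $s=x\sqrt{\eta\gamma}$ reduces the task to
\[
\frac{1}{\sqrt{\eta\gamma}}\int_\RR \frac{ds}{(1-\alpha)e^{-s/2}+(\alpha/2)e^{s/2}}
= \frac{\pi\sqrt{2/(\alpha(1-\alpha))}}{\sqrt{\eta\gamma}},
\]
which I evaluate via $u=e^{s/2}\sqrt{\alpha/(2(1-\alpha))}$ and $\int_0^\infty du/(1+u^2)=\pi/2$; reassembling the constants produces exactly the $\sqrt{\pi\alpha(1-\alpha)/\eta\gamma}\,e^{-\eta\gamma/8}$ contribution.

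\textbf{Lower bound.} I discard the non-negative $4\EE[q_+q_-]$ and lower-bound $2\EE[q_0q_+]$ by restricting the integration to $y\in[0,\sqrt\gamma]$, where $p(y|{-1})\le p(y|{+1})$ forces the reverse inequality $p(y)\le \alpha\,p(y|{+1})+(1-\alpha)p(y|0)$. After the same change of variables and rescaling, the inner integral becomes $\int_\RR ds/(\alpha e^{s/2}+(1-\alpha)e^{-s/2})=\pi/\sqrt{\alpha(1-\alpha)}$; the residual factor $e^{-s^2/(2\eta\gamma)}$ is handled by Laplace-type truncation to $|s|\le(\eta\gamma)^{1/3}$, where $e^{-s^2/(2\eta\gamma)}\to 1$ as $\eta\gamma\to\infty$. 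Reassembly yields a lower bound of the form $c\,\sqrt{\alpha(1-\alpha)/\eta\gamma}\,e^{-\eta\gamma/8}$ with $c\ge 2/\sqrt\pi$, matching the announced coefficient.

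\textbf{Main obstacle.} The central analytic challenge on both sides is extracting the polynomial prefactor $1/\sqrt{\eta\gamma}$: a plain AM-GM bound on $h(x)$ alone retains an $O(1)$ constant and misses this decay. The rescaling $s=x\sqrt{\eta\gamma}$ must be performed \emph{before} the Gaussian envelope $e^{-x^2/2}$ is removed, because $1/p(y)$ is peaked near $y=\sqrt\gamma/2$ with width $O(1/\sqrt{\eta\gamma})$ rather than $O(1)$. A secondary point is to verify that the pieces discarded along the way --- the $p(y|{-1})$ contribution in the upper bound (of order $e^{-9\eta\gamma/8}$ near the peak) and the Gaussian tails outside $[0,\sqrt\gamma]$ in the lower bound --- are strictly of lower order than the claimed bounds, so that the theorem's statement for ``sufficiently large $\eta\gamma$'' is indeed recovered.
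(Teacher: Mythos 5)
Your proof is correct, but it follows a genuinely different route from the paper's. You decompose $\mathrm{MMSE}=\EE[\mathrm{Var}(b^k|y)]=4\,\EE[q_+q_-]+\EE[q_0(q_++q_-)]$ and treat the two minimum-distance events separately: a Bhattacharyya-type step ($q_+q_-\le\sqrt{q_+q_-}$ together with $\sqrt{p(y|{+1})p(y|{-1})}=p(y|0)e^{-\eta\gamma/2}$) gives the $2\alpha e^{-\eta\gamma/2}$ term exactly, and a Laplace-type localization of $\EE[q_0 q_+]$ at the crossing point $y=\sqrt{\gamma}/2$, with the rescaling $s=x\sqrt{\eta\gamma}$ and the logistic-type integral $\int_{\RR} ds/\bigl((1-\alpha)e^{-s/2}+(\alpha/2)e^{s/2}\bigr)=\pi\sqrt{2/(\alpha(1-\alpha))}$, yields the $\sqrt{\pi\alpha(1-\alpha)/(\eta\gamma)}\,e^{-\eta\gamma/8}$ term; the lower bound discards $\EE[q_+q_-]$, restricts to $y\in[0,\sqrt{\gamma}]$ and produces the asymptotic constant $\sqrt{\pi/2}$, which strictly exceeds the stated $2/\sqrt{\pi}$, so the claimed bound follows for $\eta\gamma$ large (I verified the identity $4q_+q_-+q_0(q_++q_-)=(q_++q_-)-(q_+-q_-)^2$ and all constants). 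The paper instead lower-bounds $\mathrm{MMSE}_\alpha$ by the MMSE of the on--off $\{0,1\}$ sub-constellation and upper-bounds via the split $\mathrm{MMSE}_\alpha=\alpha\bigl(\mathrm{MMSE}_1+(\zeta_1-\zeta_\alpha)\bigr)$, then evaluates both pieces through $\sech$ expansions, the asymptotic Q-function series \eqref{ap1}, the Leibniz series for $\pi/4$, and Lemma~\ref{lem1} for the BPSK part. What your approach buys: the entire upper bound is non-asymptotic (valid for every $\eta\gamma>0$), it avoids Lemma~\ref{lem1} and the delicate term-by-term series manipulations of the appendix, and it makes transparent why the upper bound is asymptotically tight (the dropped $p(y|{-1})$ contribution is exponentially smaller near the peak). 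What the paper's route buys is an explicit asymptotic expansion of $\mathrm{MMSE}_{\{0,1\}}$ and of $\zeta_1-\zeta_\alpha$ with the specific constant $2/\sqrt{\pi}$ quoted in the theorem. Two small caveats: your lower-bound constant does not ``match'' $2/\sqrt{\pi}$ but dominates it, which is all that is needed; and the $o(1)$ in the truncation $|s|\le(\eta\gamma)^{1/3}$ depends on $\alpha$, so, as in the paper, the threshold on $\eta\gamma$ is for fixed $\alpha\in(0,1)$ rather than uniform in $\alpha$.
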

\begin{proof}
See Appendix \ref{app:proof_mmse_limits}.
\end{proof}

Bounds in \eqref{eq:2.6} describe explicitly, in
the high-SNR region, the relationship between the MMSE, the users'
activity rate, and the effective SNR ($\eta\gamma$). In Fig.~\ref{fig2} these bounds are
compared to the true MMSE values  as a function of $\eta\gamma$ for
fixed $\alpha$. It can be seen that the uncertainty about the users'
activity modifies substantially the exponential decay of the MMSE
for high SNR. In fact, a value of $\alpha$ different from $1$
causes the MMSE to decay by $\exp(-\eta\gamma/8)$, rather than by
$\exp(-\eta\gamma/2)$, which would be the case when all users are
active. Furthermore, we can observe that, for sufficiently large
effective SNR, the behavior vs. $\alpha$ of the optimal detector is
symmetric with respect to $\alpha=1/2$, which corresponds to the
maximum uncertainty of the activity rate.
Figure~\ref{fig3} shows that for large values of the product $\eta\gamma$, the MMSE essentially depends on
the minimum distance between the inactivity symbol $\{0\}$ and the
data symbols $\{-1, 1\}$, and thus users' identification prevails
over data detection. Summarizing, the dependence of the MMSE must
be symmetrical with respect to $\alpha=1/2$, since it reflects the
impact of prior knowledge about the user's activity into the
estimation.

\begin{figure} [!htbp]
         \begin{center}
         \includegraphics[width=0.6\columnwidth,, angle=270]{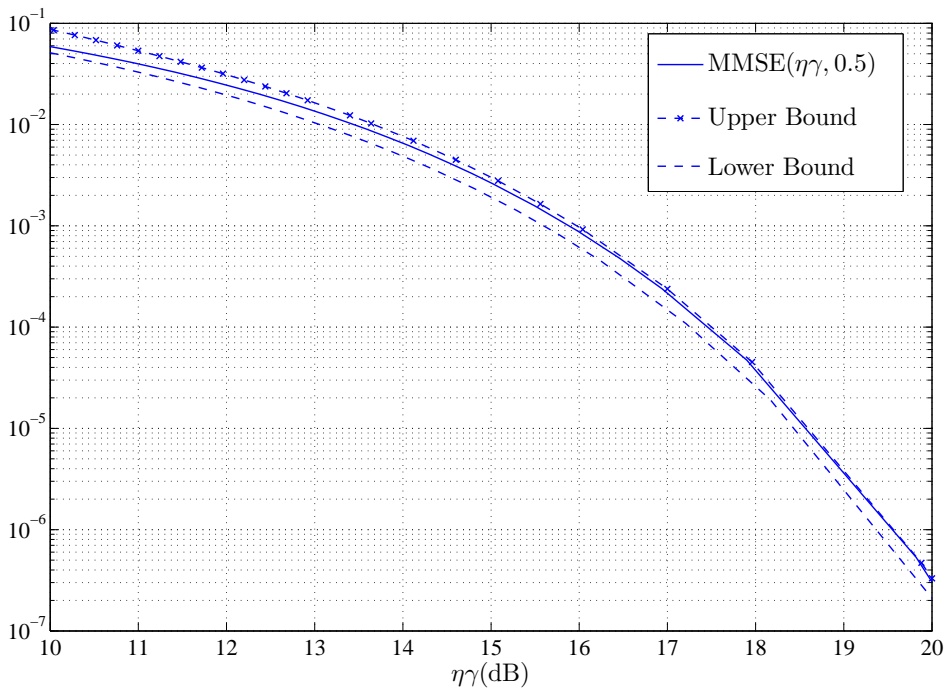}
          \caption{A comparison of the exact MMSE value with its upper
            and lower bounds for $\alpha=0.5$ and $\eta\gamma\in[10,20]$~dB}
            \label{fig2}
            \end{center}
                       \vspace{-7mm}

\end{figure}

\begin{figure} [!htbp]
         \begin{center}
         \includegraphics[width=0.6\columnwidth,, angle=270]{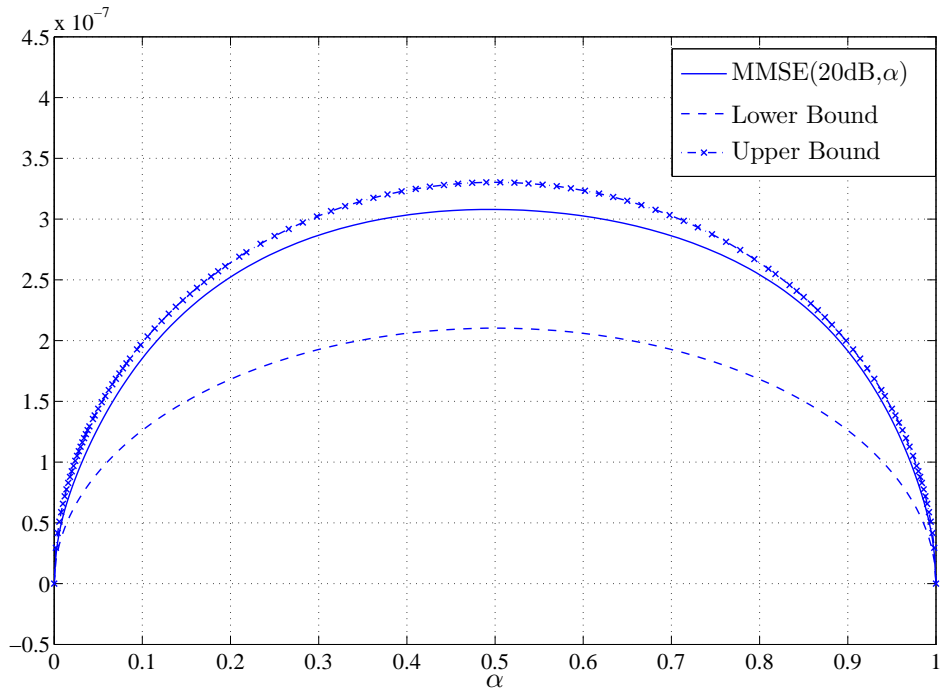}
          \caption{A comparison of the exact MMSE value with its upper
            and lower bounds for $\eta\gamma=20$~dB and $\alpha \in [0,1]$.}
            \label{fig3}
            \end{center}
                       \vspace{-7mm}

\end{figure}

\section{Maximum System Load and Related Considerations}\label{section:main2}

Recall the definition of maximum system load $\beta =
\frac{K}{N}$, where $K$ is the maximum number of users accessing
the multiuser channel. When the number of active users is unknown,
and there is a priori knowledge of the activity rate, the actual
system load is $\beta'=\alpha \beta$. In this section, we focus on
$\beta$ and study some of its properties. Notice that, given an
activity rate, results for the actual system load follow trivially.


\subsection{Solutions to the large-system fixed-point equation}
We characterize the behavior of the maximum system load subject to
quality-of-service constraints. This helps shedding light into
the nature of the solutions of the fixed-point
equation~\eqref{eq:2.3}. In particular, there might be cases
where~\eqref{eq:2.3} has multiple solutions. These solutions
correspond to the solutions appearing in any simple mathematical model of
magnetism based on the evaluation of the free energy with the
fixed-point method~\cite{Nis92}. They represent what in the
statistical physics parlance is called \emph{phase coexistence} (for
example, this occurs in ice  or liquid phase of water at $0^{\circ}$C).
In particular, at low temperatures, the magnetic system might have three
solutions $0\leq\Psi_1<\Psi_2<\Psi_3\leq1$. Solutions $\Psi_1$ and
$\Psi_3$ are stable: one of them is globally stable (it actually
minimizes the free energy), whereas the other is metastable, and a
local minimum. Solution $\Psi_2$ is always unstable, since it is is
a local maximum. The ``true'' solution is therefore given by
$\Psi_1$ and $\Psi_3$, for which the free energy is a minimum. The
same consideration applies also to our multiuser detection problem
where multiuser efficiencies for the IO detector might vary
significantly depending on the value of the system load and SNR.
More specifically, for sufficiently large SNR, stable solutions may switch
between a region that approaches the single-user performance
($\eta=1-\epsilon_1$) and a region approaching the worst performance
($\eta= \epsilon_0$), for $0<\epsilon_1,\epsilon_0\ll1$. Following
previous literature~\cite{Tan02}, we shall call the former solutions
\emph{good} and the latter \emph{bad}. When the solution is unique,
due to low or high system load, the multiuser efficiency is a
globally stable solution that lies in either the good or the bad
solution region. Then, for given system parameters, the set of
\emph{operational} (or globally stable) solutions is formed by
solutions that are part of these sets and minimize the free
energy.

The existence of good and bad
solutions are critical in our problem. From a computational perspective, we are
particularly interested in single solutions, either bad or good,
that surely avoid metastability and instability. These solutions
belong to a specific subregion within the  bad  and  good regions,
and appear for low and high SNR, respectively. From an
information-theoretic perspective, it might seem that the true
solutions should capture all our attention. However, it has been
shown that metastable solutions appear in suboptimal
belief-propagation-based multiuser detectors, where the system is
easily attracted into the bad solutions region (corresponding
to low multiuser efficiency), due to initial configurations that are  far
from the true solution~\cite{Tan08}. Moreover, the region of good
solutions is of interest in the high-SNR analysis, because, for a
given system load, it can be observed that the multiuser
efficiency tends to $1$, consistently with previous theoretical
results~\cite{Tse00}. In what follows, we provide an analysis of
the boundaries of the stable solution regions, as well as their
computationally feasible subregions with practical interest in the
low and high SNR regimes.

\begin{figure} [!htbp]
         \begin{center}
         \includegraphics[width=0.9\columnwidth]{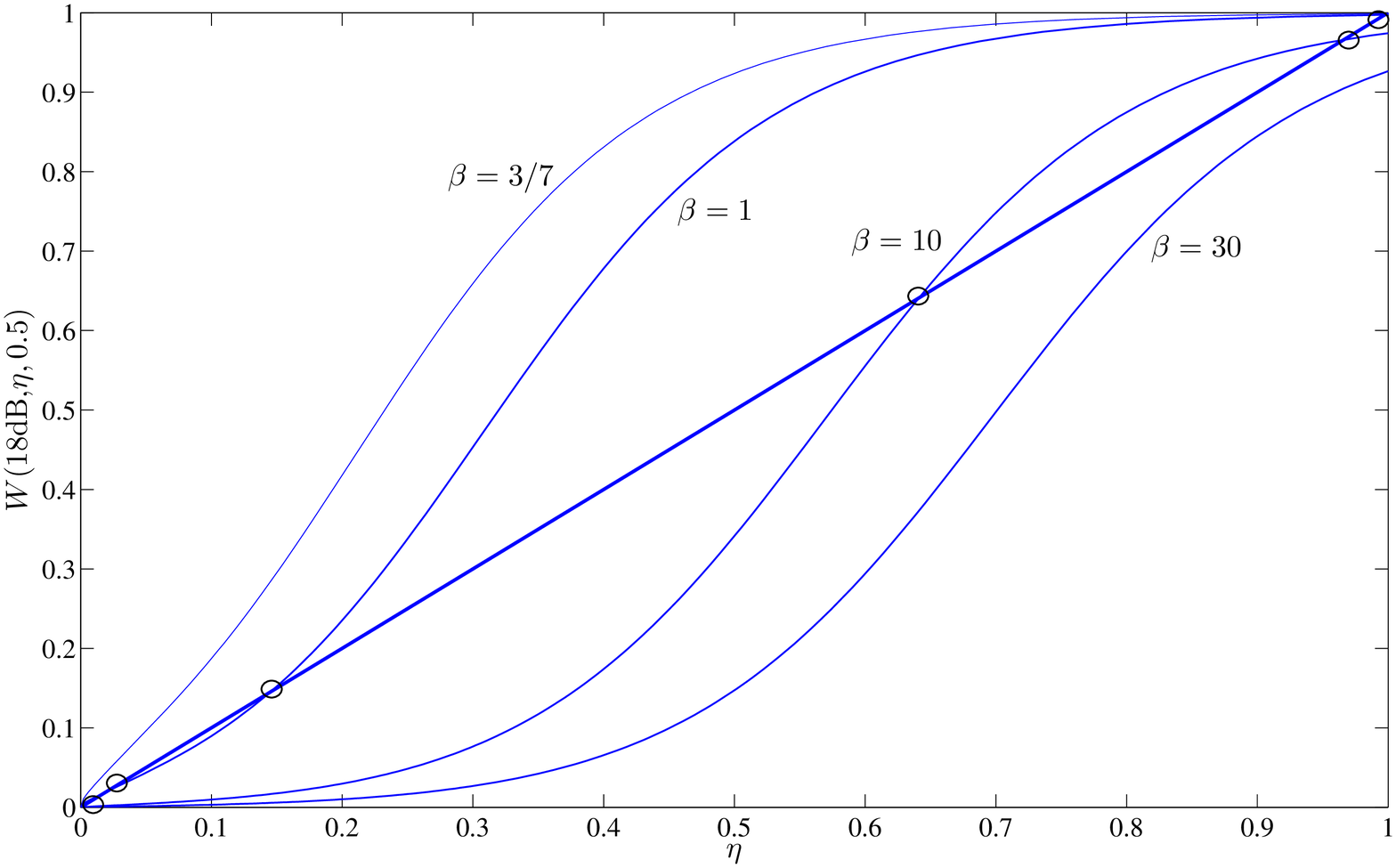}
          \caption{Fixed-point solutions (marked by circles) for different values of $\beta$ and fixed $\alpha=0.5$, and $\gamma=18$ dB.}
           \label{fig4}
            \end{center}
                       \vspace{-7mm}

\end{figure}

A quantitative illustration of the above considerations
is provided by plotting the left- and right-hand sides of~\eqref{eq:2.3} to
obtain fixed points for constant values of amplitude and activity
rate, and as a function of the system load. The solutions
of~\eqref{eq:2.3} are found at the intersection of the curve
corresponding to the right-hand side with the $y=\eta$ line.
Fig.~\ref{fig4} plots different solutions of the right-hand side
of~\eqref{eq:2.3} for increasing system load, $\alpha=0.5$ and
$\gamma=18$ dB:
\begin{equation*}
W(\gamma, \eta, \alpha)\triangleq\frac{1}{1+\beta\gamma
\rm{MMSE}(\eta\gamma,\alpha)}
\end{equation*}
Notice first that the structure of the fixed-point equation in general does not
allow the solution $\eta=0$, and for finite $\gamma$ and
$\beta$, $\eta=1$  is not a solution. In fact, the latter is an
asymptotic solution for large SNR and certain system loads, as the
MMSE decays exponentially to $0$. From Fig.~\ref{fig4}, one can
observe the presence of phase transitions and the coexistence of multiple solutions.
In particular, we observe that for $\beta=3/7$ the good solution is
computationally feasible. On the other hand, for $\beta=1$ and
$\beta=10$ the system has three solutions, where the true solution
belongs to either the bad or the good solution region. When the
system load achieves $\beta=30$, the curve only intersects the
identity curve near $0$, and the operational solution is unique and lies in
a subregion of bad solutions.


\subsection{System load and the space of fixed-point solutions}

Even in the case of good solutions, the multiuser efficiency can be
greatly degraded by the joint effect of the activity rate and the
maximum system load. In order to analyze the fixed-point
equation~\eqref{eq:2.3} from a different perspective and shed light
into the interplay between these parameters, we express the maximum
system load as the following function, derived from~\eqref{eq:2.3}:
\begin{equation}\label{eq:2.9}
    \Upsilon_{\beta}(\gamma,\eta,\alpha)\triangleq\frac{\left(1-\eta\right)}{\eta\gamma \textrm{MMSE}(\eta\gamma,\alpha)}
\end{equation}
Since MMSE is a continuous function of $\eta$ \cite{Guo10}, then
$\Upsilon_{\beta}$ is also a continuous function in any
compact set over the domain $\eta\in(0,1]$ for given SNR and activity
rate. It is also easy to observe that, for small values of $\eta$,
$\Upsilon_{\beta}$ tends to infinity regardless of $\gamma$ and $\alpha$, whereas in
the high-$\eta$ region, which is of interest here, it decays to
$0$. Before analyzing  the behavior of~\eqref{eq:2.9}, we
introduce a few definitions that help describe the boundaries
between the regions with and without coexistence (in the
statistical-physics literature, these boundaries are called
\emph{spinodal lines}~\cite{Tan02}). We also define appropriately
the regions of potentially stable solutions as introduced before.

\begin{definition}
The \emph{critical system load} $\beta^{\star}(\gamma,\alpha)$ is
the maximum load at which a stable good solution
of~\eqref{eq:2.3} exists.
\end{definition}

\begin{definition}
The \emph{transition system load} $\beta_{\star}(\gamma,\alpha)$ is
the minimum load at which the true solution of~\eqref{eq:2.3},
$\eta_{\star}$ coexists with other solutions
$\eta_{\star}'$.
\end{definition}

\begin{definition}
The \emph{good solution region} corresponds to the domain
of~\eqref{eq:2.9} formed by the maximum $\eta$  in every set of
pre-images of $\Upsilon_{\beta}$ below the critical system load:
\begin{equation}\label{eq:2.9b}
\mathcal{R}_g=\left\{\eta\in[0,1],
\eta=\max\{\Upsilon_{\beta}^{-1}(\beta)\}, \forall \beta \in
[0,\beta^{\star}]\right\}
\end{equation}
Similarly, the \emph{bad solution region} corresponds to the
domain of \eqref{eq:2.9} formed by the minimum $\eta$  in every
set of pre-images of $\Upsilon_{\beta}$ above the transition
system load:
\begin{equation}\label{eq:2.9c}
\mathcal{R}_b=\left\{\eta\in[0,1],
\eta=\min\{\Upsilon_{\beta}^{-1}(\beta)\}, \forall \beta \in
[\beta_{\star},+\infty)\right\}
\end{equation}

\end{definition}

\begin{figure} [!htbp]
         \begin{center}
         \includegraphics[width=0.9\columnwidth]{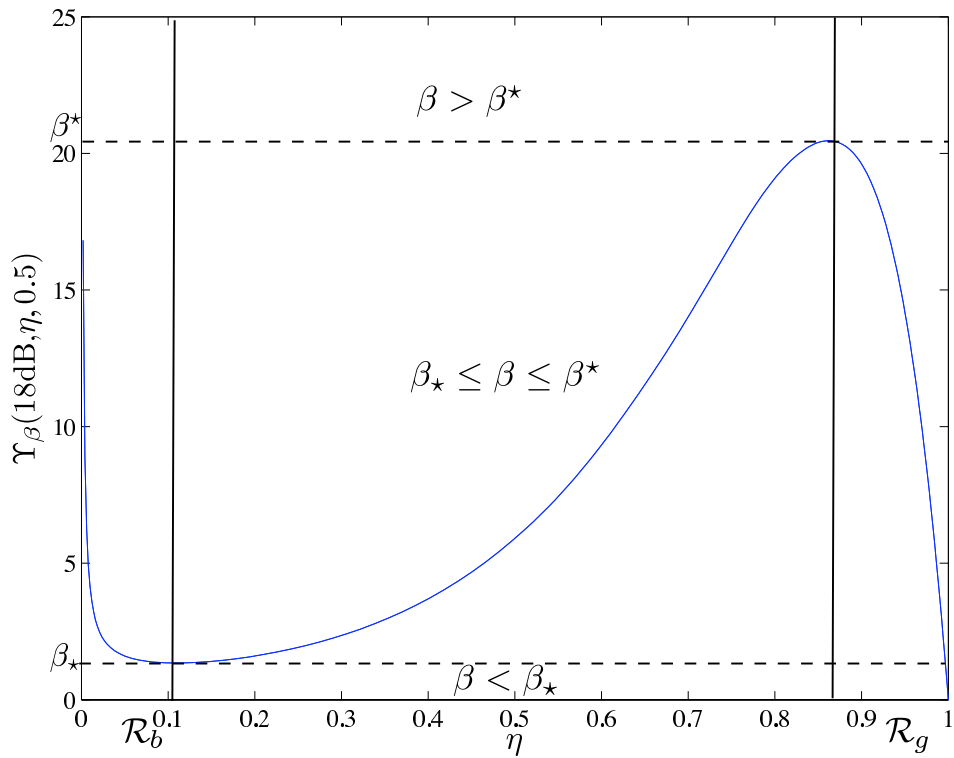}%
          \caption{System load function in the multiuser efficiency domain for $\alpha=0.5$ and $\gamma=18$ dB.}
            \label{fig5b}
                       \vspace{-4mm}

            \end{center}
\end{figure}


\bigskip
Fig. \ref{fig5b} illustrates $\Upsilon_{\beta}$ (for fixed SNR and activity
rate) and show the regions defined by the aforementioned parameters.
It is important to remark that both system loads defined above
delimit the regions from which there is phase coexistence
($\beta_{\star}\leq\beta\leq\beta^{\star}$) from the areas where
there is one solution ($\beta>\beta^{\star}$ or
$\beta<\beta_{\star}$). Additionally, Fig. 
\ref{fig5b} illustrates the set of solutions that satisfy conditions
\eqref{eq:2.9b}, \eqref{eq:2.9c}.



Fig. \ref{fig5b} illustrates that it seems useful to define
analytically the domain where stable solutions can be found.
Beforehand, we differentiate for convenience the case with
unknown number of users $\alpha\in (0,1)$, from the case where all
users are active ($\alpha=1$). We do not consider the case
$\alpha=0$.

\subsubsection{Case $\alpha\in(0,1)$}

In order to analyze the conditions on the system load, SNR, and
activity rate, for which we can find a good solution, we use the
asymptotic results on the MMSE \eqref{eq:2.6},
 yielding lower and upper bounds
$L(.)\leq\Upsilon_{\beta}(\gamma,\eta,\alpha)\leq U(.)$ for large enough $\eta\gamma$,
where
\begin{eqnarray}
L(\gamma,\eta,\alpha)&\triangleq&\frac{\left(1-\eta\right)}{\sqrt{\pi\eta\gamma\alpha(1-\alpha)}}
e^{\eta\gamma/8}\label{eqaux1}\\
U(\gamma,\eta,\alpha)&\triangleq&\frac{\left(1-\eta\right)}{2}\sqrt{\frac{\pi}{\eta\gamma\alpha(1-\alpha)}}
e^{\eta\gamma/8}\label{eqaux2}
\end{eqnarray}


Although not exact for low SNR, the dependence on $\eta$ of the
upper and lower bound provides a good approximation for the
dependence of $\Upsilon_{\beta}$ for large SNR
and given $\alpha$. Hence, by using $U(.)$ and $L(.)$, we
obtain necessary and sufficient conditions  that determine the
regions of stable solutions and provide analytical expressions for
the transition and critical system loads. The main result for
$\alpha\in(0,1)$ follows:

\begin{thm}\label{thm4}
Given the range of activity rates $\alpha\in(0,1)$, a necessary
condition for phase coexistence is
\begin{equation}\label{eq:2.10}
\gamma\geq4(3+2\sqrt{2}), \quad(\gamma\geq 13.67 \; \textrm{dB})
\end{equation}
Moreover, for high SNR, the condition is met and the transition
system load is bounded by
\begin{equation}\label{eq:2.11}
L(\gamma,\eta_m,\alpha)<\beta_{\star}(\gamma,\alpha)<U(\gamma,\eta_m,\alpha)
\end{equation}
while the critical system load is bounded by
\begin{equation}\label{eq:2.12}
L(\gamma,\eta_M,\alpha)<\beta^{\star}(\gamma,\alpha)<U(\gamma,\eta_M,\alpha)
\end{equation}
and $\eta_{m}$, $\eta_{M}$ are given by
\begin{eqnarray*}
\eta_m&\triangleq&(\gamma/2-2-4\Delta(\gamma))/\gamma\\
\eta_M&\triangleq&(\gamma/2-2+4\Delta(\gamma))/\gamma
\end{eqnarray*}
where $\Delta(\gamma)=\sqrt{(\gamma/8)^2-3\gamma/8+1/4}$.

Hence, the bad-solution region is given by
$\mathcal{R}_{b}=(0,\eta_{m}]$, whereas the good-solution region
is $\mathcal{R}_{g}=[\eta_{M},1]$. Similarly, the subregions of
single bad solutions, that we shall denote
$R_{bc}=(0,\eta_{bc})\subset\mathcal{R}_{b}$,
and of single good solutions, denoted by
$R_{gc}=(\eta_{gc},1]\subset\mathcal{R}_{g}$,
satisfy
\begin{eqnarray*}
\eta_{bc}&=&\min\{\Upsilon_{\beta}^{-1}(\beta^{\star})\}>\eta^{\star}_{bc}\\
\eta_{gc}&=&\max\{\Upsilon_{\beta}^{-1}(\beta_{\star})\}<\eta^{\star}_{gc}
\end{eqnarray*}
where $\eta^{\star}_{bc}\triangleq\min\{U^{-1}(\beta^{\star})\}$,
and $\eta^{\star}_{gc}\triangleq\max\{L^{-1}(\beta^{\star})\}$ are
obtained from the bounds.
\end{thm}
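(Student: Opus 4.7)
The plan is to exploit the envelope bounds $L(\gamma,\eta,\alpha)\le\Upsilon_\beta(\gamma,\eta,\alpha)\le U(\gamma,\eta,\alpha)$ obtained by substituting the MMSE sandwich of Theorem~\ref{thm2} into the definition \eqref{eq:2.9}. The key structural observation is that the two envelopes \eqref{eqaux1}--\eqref{eqaux2} differ only by the constant multiplicative factor $U/L=\pi/2$, and hence share exactly the same $\eta$-dependence: in particular they have identical critical points. The analysis of the non-monotonic behaviour of $\Upsilon_\beta$ that drives phase coexistence can therefore be reduced to a one-line calculus exercise on $L$. Writing $\log L=\log(1-\eta)-\tfrac12\log\eta+\eta\gamma/8+C(\gamma,\alpha)$ and setting $\partial_\eta\log L=0$ leads to the quadratic $\gamma\eta^{2}-(\gamma-4)\eta+4=0$, whose discriminant equals $\gamma^{2}-24\gamma+16=64\,\Delta(\gamma)^{2}$. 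Real roots exist iff $\gamma\ge 12+8\sqrt{2}=4(3+2\sqrt{2})$, which is precisely the necessary condition \eqref{eq:2.10}, and the two roots are the $\eta_m<\eta_M$ displayed in the statement.

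Since $L\to+\infty$ as $\eta\to 0^{+}$ (from the $\eta^{-1/2}$ factor) and $L\to 0$ as $\eta\to 1^{-}$ (from the $(1-\eta)$ factor), once the discriminant is positive I would conclude that $\eta_m$ is a local minimum and $\eta_M$ a local maximum of both envelopes. Denoting by $\eta_m^{*},\eta_M^{*}$ the true extremizers of $\Upsilon_\beta$, the sandwich combined with the minimizing property of $\eta_m$ for $L$ yields $\beta_\star=\Upsilon_\beta(\eta_m^{*})\ge L(\eta_m^{*})\ge L(\eta_m)$ and simultaneously $\beta_\star\le\Upsilon_\beta(\eta_m)\le U(\eta_m)$, which is exactly \eqref{eq:2.11}; the analogous chain at $\eta_M$, using that it is the local maximum of $U$, produces \eqref{eq:2.12}. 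The resulting profile of $\Upsilon_\beta$ (decreasing on $(0,\eta_m]$, increasing on $[\eta_m,\eta_M]$, decreasing again on $[\eta_M,1]$), together with the definitions \eqref{eq:2.9b}--\eqref{eq:2.9c}, immediately gives $\mathcal{R}_b=(0,\eta_m]$ and $\mathcal{R}_g=[\eta_M,1]$, since for any $\beta\in[\beta_\star,\beta^\star]$ the smallest preimage lies on the leftmost monotone branch and the largest on the rightmost. For the single-solution subregions one sets $\eta_{bc}$ and $\eta_{gc}$ as in the statement and compares them with the analogous preimages of the envelopes $\eta^{\star}_{bc}=\min U^{-1}(\beta^{\star})$ and $\eta^{\star}_{gc}=\max L^{-1}(\beta^{\star})$; the sandwich $L\le\Upsilon_\beta\le U$ together with the monotonicity of the envelopes on each branch yields the strict inequalities $\eta_{bc}>\eta^{\star}_{bc}$ and $\eta_{gc}<\eta^{\star}_{gc}$.

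The main obstacle is that the MMSE sandwich of Theorem~\ref{thm2} is only tight in the asymptotic regime $\eta\gamma\to\infty$, whereas the bad-solution region lives near $\eta\to 0^{+}$, where this regime is marginal; controlling the gap between $L$ and the true $\Upsilon_\beta$ there, and in particular upgrading the necessary condition from an envelope-based heuristic to a genuinely global claim, is the delicate technical point. This is exactly what forces the qualifier \emph{for high SNR} attached to the bounds \eqref{eq:2.11}--\eqref{eq:2.12}. Once that regime is assumed, the remainder of the argument is elementary calculus on the envelopes $L$ and $U$, with every quantitative constant in the statement traced back to the unique quadratic $\gamma\eta^{2}-(\gamma-4)\eta+4=0$.
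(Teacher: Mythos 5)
Your proposal is correct and follows essentially the same route as the paper's Appendix C: the paper studies the common $\eta$-dependence of the envelopes via $G(\eta)=(1-\eta)e^{u\eta}/\sqrt{\eta}$ with $u=\gamma/8$, and setting its derivative to zero gives exactly your quadratic $\gamma\eta^{2}-(\gamma-4)\eta+4=0$, the threshold $\gamma\geq 4(3+2\sqrt{2})$, the critical points $\eta_m,\eta_M$, and the same pre-image/shape analysis yielding $\mathcal{R}_b=(0,\eta_m]$, $\mathcal{R}_g=[\eta_M,1]$ and the single-solution subregions. Your explicit sandwich chains for \eqref{eq:2.11}--\eqref{eq:2.12} and your remark that $U=(\pi/2)L$ (hence identical critical points) only make explicit what the paper leaves implicit when it says the results follow ``by bounding the MMSE using \eqref{eq:2.6},'' and you correctly flag the envelope-to-$\Upsilon_\beta$ high-SNR caveat that the paper also relies on.
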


\begin{proof}
See Appendix \ref{app:beta_proof}.
\end{proof}

\bigskip
The above result provides the general boundaries of the space of
solutions of our problem. It is important to note that $\eta_m$ and
$\eta_M$ are very good approximations for high SNR of the positions
of the minimum and maximum observed in Fig. \ref{fig5b}, which 
determine the transition and the critical system loads. As a consequence, remark that Theorem \ref{thm4} analytically  tells us 
the range of $\beta$'s for which 
there are either single or multiple solutions based on the up-to-a-constant approximation of  $\Upsilon_{\beta}$ by \eqref{eqaux1} and \eqref{eqaux2}. Similarly,
$\eta^{*}_{bc}$ and $\eta^{*}_{gc}$ are tight bounds of the
boundaries of the single-solution regions as $U(.)$ and
$L(.)$ are of $\Upsilon_{\beta}(.)$.
Note also that the
activity rate affects the boundaries in the same symmetrical manner
as it does the MMSE (i.e., the worst case here also corresponds to
$\alpha=0.5$) but has no impact on the operational region, that is
only reduced in size by increasing
the SNR. 
In particular, these regions are characterized, in the limit of high
SNR, as follows:
\begin{cor}
In the limit of high SNR, $\mathcal{R}_{g}\to \{1\}$,
$\mathcal{R}_{b}\to \{0\}$, and consequently $\mathcal{R}_{gc}\to
\{1\}$, and $\mathcal{R}_{bc}\to \{0\}$ .
\end{cor}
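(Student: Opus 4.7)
The plan is to invoke Theorem~\ref{thm4} and perform an asymptotic expansion of its closed-form expressions for $\eta_m$ and $\eta_M$ as $\gamma\to\infty$. Since Theorem~\ref{thm4} already identifies $\mathcal{R}_b=(0,\eta_m]$ and $\mathcal{R}_g=[\eta_M,1]$ explicitly, the corollary reduces entirely to showing that $\eta_m\to 0$ and $\eta_M\to 1$, together with a containment argument for the single-solution subregions.

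First I would rewrite $\Delta(\gamma)=\sqrt{(\gamma/8)^2-3\gamma/8+1/4}$ by factoring out its dominant term, obtaining
\begin{equation*}
\Delta(\gamma)=\frac{\gamma}{8}\sqrt{1-\frac{24}{\gamma}+\frac{16}{\gamma^2}}.
\end{equation*}
A standard Taylor expansion of $\sqrt{1+x}$ around $x=0$ then yields $\Delta(\gamma)=\gamma/8-3/2+O(1/\gamma)$, so that $4\Delta(\gamma)=\gamma/2-6+O(1/\gamma)$. Substituting directly into the expressions of Theorem~\ref{thm4} gives
\begin{align*}
\eta_m &= \frac{\gamma/2-2-4\Delta(\gamma)}{\gamma}=\frac{4+O(1/\gamma)}{\gamma}\longrightarrow 0,\\
\eta_M &= \frac{\gamma/2-2+4\Delta(\gamma)}{\gamma}=1-\frac{8+O(1/\gamma)}{\gamma}\longrightarrow 1,
\end{align*}
both as $\gamma\to\infty$. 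Combined with the descriptions $\mathcal{R}_b=(0,\eta_m]$ and $\mathcal{R}_g=[\eta_M,1]$, this is exactly the claimed collapse $\mathcal{R}_b\to\{0\}$ and $\mathcal{R}_g\to\{1\}$.

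For the computationally feasible subregions no further computation is needed: by construction in Theorem~\ref{thm4} one has $\mathcal{R}_{bc}\subseteq \mathcal{R}_b$ and $\mathcal{R}_{gc}\subseteq \mathcal{R}_g$, so the bounds $\eta_{bc}\leq \eta_m\to 0$ and $\eta_{gc}\geq \eta_M\to 1$ propagate immediately to $\mathcal{R}_{bc}\to\{0\}$ and $\mathcal{R}_{gc}\to\{1\}$. The proof presents no real technical obstacle; the only subtlety worth flagging is interpretive, namely that the limits $\mathcal{R}_b\to\{0\}$ and $\mathcal{R}_g\to\{1\}$ should be read in the Hausdorff sense (intervals collapsing onto their boundary points), since the half-open intervals themselves do not contain $0$ or $1$ for any finite $\gamma$.
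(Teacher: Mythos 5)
Your proposal is correct and follows essentially the same route as the paper: the paper's proof likewise consists of computing $\lim_{\gamma\to\infty}\eta_M=1$ and $\lim_{\gamma\to\infty}\eta_m=0$ from their closed-form expressions and then reading off the collapse of $\mathcal{R}_g=[\eta_M,1]$ and $\mathcal{R}_b=(0,\eta_m]$ (and, by containment, of the subregions). Your explicit expansion $\Delta(\gamma)=\gamma/8-3/2+O(1/\gamma)$ and the remark about interpreting the limits in the Hausdorff sense simply make explicit what the paper leaves implicit.
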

\begin{proof}
The above corollary results from
\begin{eqnarray*}
\lim_{\gamma\to\infty}\eta_M&=&\lim_{\gamma\to\infty}\frac{(\gamma/2-2+4\sqrt{(\gamma/8)^2-3\gamma/8+1/4})}{\gamma}=1 \\
\lim_{\gamma\to\infty}\eta_m&=&\lim_{\gamma\to\infty}\frac{(\gamma/2-2-4\sqrt{(\gamma/8)^2-3\gamma/8+1/4})}{\gamma}=0
\end{eqnarray*}
\end{proof}

\bigskip
Note that, given a system load $\beta$ with $\beta^{\star}>\beta$,
for sufficiently large SNR the unique true (large-system) solution
is $\eta=1$, which corroborates the main result in~\cite{Tse00}.
Moreover, the description of the feasible good solutions by
analytical means allows the computation of a sufficient condition
on the system load to guarantee a given multiuser efficiency in
practical implementations. More specifically, we use the
aforementioned lower bound on $\Upsilon$ to state that any system
load below $L(.)$ guarantees that the given multiuser efficicency is achieved. The
result is stated as follows:
\begin{cor}\label{cor1}
The maximum system load, $\beta_{\alpha,\eta}$, for a given
activity rate and multiuser efficiency requirement,
$\eta=1-\epsilon$, where $0<\epsilon\ll1$, that lies in $R_{gc}$, is
lower-bounded in the high-SNR region by
\begin{equation}\label{eq:2.13}
\beta_{\alpha,\eta}>\frac{\epsilon}{\sqrt{\pi\eta\gamma\alpha(1-\alpha)}}
e^{(1-\epsilon)\gamma/8}.
\end{equation}
\end{cor}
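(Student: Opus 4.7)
The plan is to recognize that this corollary is a direct specialization of the lower bound $L(\gamma,\eta,\alpha)$ from Theorem~\ref{thm4} (which was itself derived from Theorem~\ref{thm2}) to the choice $\eta=1-\epsilon$. The approach has three steps: (i) express $\beta_{\alpha,\eta}$ as $\Upsilon_\beta$ evaluated at the target efficiency; (ii) lower-bound $\Upsilon_\beta$ using the MMSE upper bound of Theorem~\ref{thm2}; and (iii) perform the substitution $\eta=1-\epsilon$.

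For step (i), I rewrite the fixed-point equation \eqref{eq:2.3} as $\beta=\Upsilon_\beta(\gamma,\eta,\alpha)$, making the system load an explicit function of any fixed-point value. Within the single good-solution subregion $R_{gc}$, the function $\Upsilon_\beta$ decreases monotonically in $\eta$ from its local maximum near $\eta_M$ down to $0$ at $\eta=1$ (cf.\ Fig.~\ref{fig5b}, and consistent with the high-SNR behavior identified in Theorem~\ref{thm4}); hence the largest $\beta$ that still delivers an operational efficiency of $\eta=1-\epsilon$ is exactly $\beta_{\alpha,\eta}=\Upsilon_\beta(\gamma,1-\epsilon,\alpha)$. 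For step (ii), I substitute the MMSE upper bound from \eqref{eq:2.6} into \eqref{eq:2.9}; at sufficiently large $\eta\gamma$, the term $2\alpha e^{-\eta\gamma/2}$ is dominated by $\sqrt{\pi\alpha(1-\alpha)/(\eta\gamma)}\,e^{-\eta\gamma/8}$ (their ratio is $O(\sqrt{\eta\gamma}\,e^{-3\eta\gamma/8})\to 0$), leaving
\begin{equation*}
\Upsilon_\beta(\gamma,\eta,\alpha)\;\geq\;\frac{1-\eta}{\sqrt{\pi\eta\gamma\alpha(1-\alpha)}}\,e^{\eta\gamma/8}\;=\;L(\gamma,\eta,\alpha).
\end{equation*}
Step (iii) is then immediate: setting $\eta=1-\epsilon$ turns $1-\eta$ into $\epsilon$ in the numerator and $\eta\gamma$ into $(1-\epsilon)\gamma$ in the exponent, while $\eta$ is retained inside the square root exactly as written in the statement.

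The only real subtlety is justifying the asymptotic dropping of $2\alpha e^{-\eta\gamma/2}$, which is what converts the exact inequality $\Upsilon_\beta\geq(1-\eta)/(\eta\gamma\cdot\textrm{MMSE upper bound})$ into the cleaner strict inequality $\Upsilon_\beta>L$. Because the corollary is explicitly cast in the high-SNR regime with $\eta=1-\epsilon$ close to $1$, the factor $e^{-3\eta\gamma/8}$ vanishes exponentially and the strict inequality stated in \eqref{eq:2.13} absorbs the small remaining slack. No further obstacle arises: the monotonicity of $\Upsilon_\beta$ on $R_{gc}$ and the identification $\beta_{\alpha,\eta}=\Upsilon_\beta(\gamma,1-\epsilon,\alpha)$ both follow from the structural results already established in Section~\ref{section:main2}.
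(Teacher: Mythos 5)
Your proposal is correct and follows essentially the same route as the paper: the paper obtains \eqref{eq:2.13} by taking the bound $L(\gamma,\eta,\alpha)\leq\Upsilon_{\beta}(\gamma,\eta,\alpha)$ from Theorem~\ref{thm4} (itself the MMSE upper bound of Theorem~\ref{thm2} substituted into \eqref{eq:2.9}) and evaluating it at $\eta=1-\epsilon$ on the good-solution subregion. Your added remarks on the monotonicity of $\Upsilon_{\beta}$ near $\eta=1$ and on the asymptotic negligibility of the $2\alpha e^{-\eta\gamma/2}$ term simply make explicit what the paper leaves implicit in defining $L(.)$ for large $\eta\gamma$.
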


In Fig.~\ref{fig6} we show the numerical values of the transition
and the critical system load as a function of the SNR in the
$(\gamma,\beta)$ space. We also use the asymptotic expansion to
derive upper and lower bounds, respectively. The plotted curves are
the spinodal lines, which mark the boundary between the regions with
and without solution coexistence. The $\beta_{\star}$ (lower branch)
separates the region where the bad solution disappears, whereas
$\beta^{\star}$ (upper branch) contains the bifurcation points at
which the operational solution disappears. The intersection point
between both branches corresponds to the SNR
threshold~\eqref{eq:2.10}, which provides the necessary condition
for solution coexistence.


\begin{figure} [!htbp]
         \begin{center}
         \includegraphics[width=1.0\columnwidth]{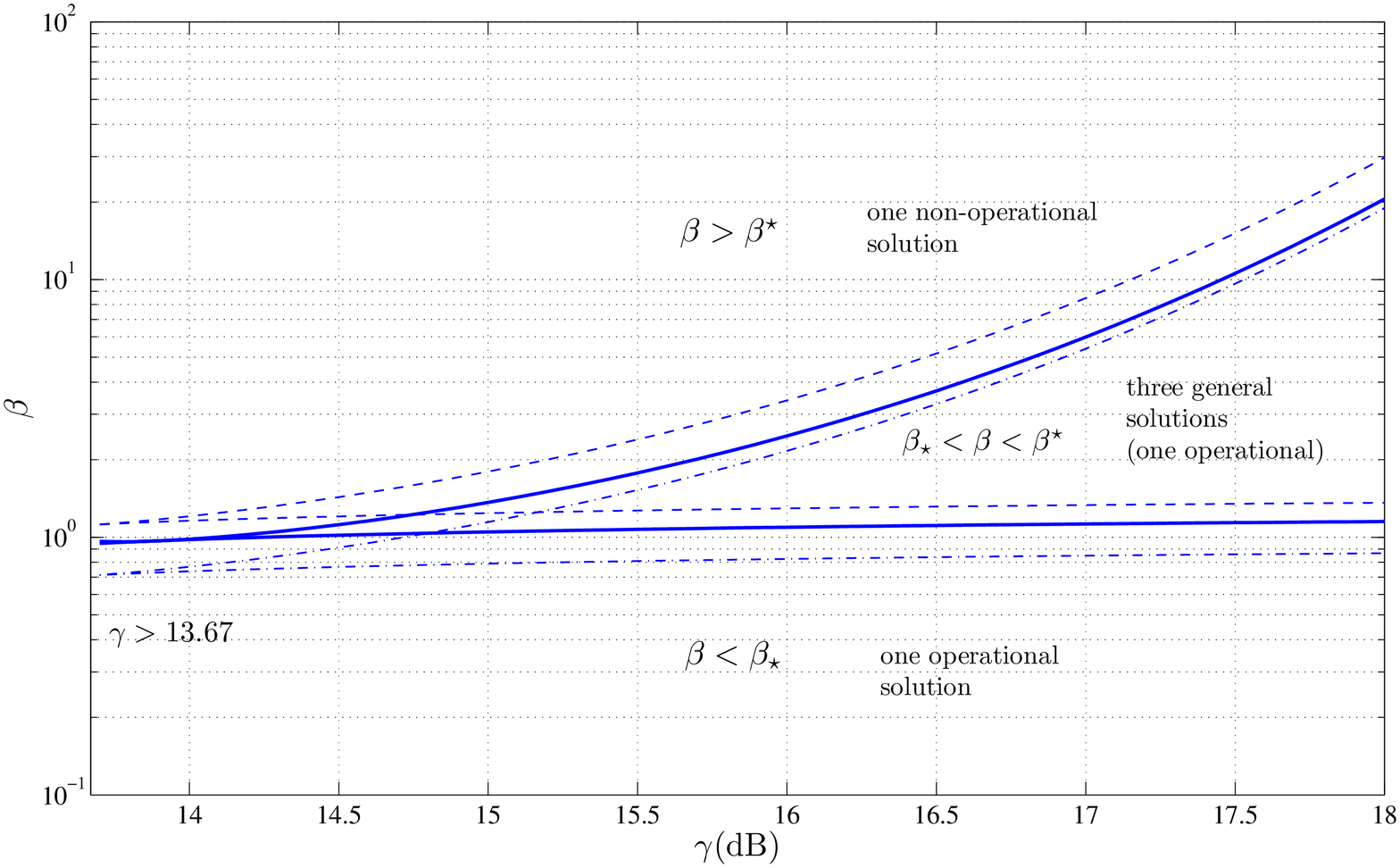}
          \caption{\label{fig6}
            {Upper and lower bounds on the numerical spinodal lines (thick line) for $\alpha=0.5$. }}
            \end{center}
                       \vspace{-7mm}

\end{figure}

\subsubsection{Case $\alpha=1$}
We now apply the same reasoning for the ``classical'' approach to
multiuser detection, corresponding to activity rate $1$. In this
case, using the approximation in~\cite{Loz06}, the system load
function can be lower-bounded by
\begin{equation}\label{eq:2.14}
   \Upsilon_{\beta}(\gamma, \eta, 1)=\frac{\left(1-\eta\right)}{\eta\gamma
    \textrm{MMSE}(\eta\gamma,1)}<\frac{\left(1-\eta\right)e^{\eta\gamma/2}}{\sqrt{\pi\eta\gamma}}
\end{equation}
Hence, we can derive the following spinodal lines
\begin{cor}
Given $\alpha=1$ a necessary condition for the phase coexistence is
that
\begin{equation*}
\gamma\geq3+2\sqrt{2}, \quad(\gamma\geq 7.65\textrm{dB}).
\end{equation*}

Moreover, for high SNR, the condition is met and the transition
system load is upper-bounded by
\begin{equation}\label{eq:2.15}
\beta_{\star}<\frac{\left(1-\eta_{m'}\right)}{2}\sqrt{\frac{\pi}{\eta_{m'}\gamma}}
e^{\frac{\eta_{m'}\gamma}{2}}
\end{equation}
and the critical system load is upper-bounded by
\begin{equation}\label{eq:2.16}
\beta^{\star}<\frac{\left(1-\eta_{M'}\right)}{2}\sqrt{\frac{\pi}{\eta_{M'}\gamma}}
e^{\frac{\eta_{M'}\gamma}{2}}
\end{equation}
where $\eta_{m'}$ and $\eta_{M'}$ are given by
\begin{eqnarray*}
\eta_{m'}&\triangleq&\frac{(\gamma/2-1/2-\Lambda(\gamma))}{\gamma}\\
\eta_{M'}&\triangleq&\frac{(\gamma/2-1/2+\Lambda(\gamma))}{\gamma}
\end{eqnarray*}
and $\Lambda(\gamma)=\sqrt{(\gamma/2)^2-3\gamma/2+1/4}$. 

Hence,
the bad solution region is given by $\mathcal{R}_{b}=(0,\eta_{m'}]$ whereas
the good solution region is $\mathcal{R}_{g}=[\eta_{M'},1]$.
\end{cor}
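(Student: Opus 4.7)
The plan is to specialize the proof of Theorem~\ref{thm4} to the binary antipodal case $\alpha = 1$ ($d = 2$), where only the one-sided upper bound~\eqref{eq:2.14} on $\Upsilon_\beta$ is available. The key object to analyze is
\begin{equation*}
U'(\gamma,\eta) := \frac{1-\eta}{2}\sqrt{\frac{\pi}{\eta\gamma}}\, e^{\eta\gamma/2},
\end{equation*}
obtained by substituting into $\Upsilon_\beta(\gamma,\eta,1) = (1-\eta)/(\eta\gamma\,\textrm{MMSE}(\eta\gamma,1))$ an $O(1/\sqrt{\eta\gamma})$ lower bound on the BPSK MMSE provided by Lemma~\ref{lem1}. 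By construction $U'$ dominates $\Upsilon_\beta$ for sufficiently large $\eta\gamma$ and shares its qualitative shape.

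Next I study the shape of $U'$ on $(0,1)$. Since $U'(\eta)\to\infty$ as $\eta\to 0^+$ and $U'(\eta)\to 0$ as $\eta\to 1^-$, any pair of interior critical points must consist of a local minimum followed by a local maximum. A straightforward logarithmic differentiation yields
\begin{equation*}
\frac{d\ln U'}{d\eta} = -\frac{1}{1-\eta} - \frac{1}{2\eta} + \frac{\gamma}{2},
\end{equation*}
whose zeros are precisely the roots of the quadratic $\gamma\eta^2 - (\gamma-1)\eta + 1 = 0$ with discriminant $\gamma^2 - 6\gamma + 1$. After identifying $\Lambda(\gamma) = \tfrac12\sqrt{\gamma^2 - 6\gamma + 1}$, these roots coincide with $\eta_{m'}$ (local min, smaller root) and $\eta_{M'}$ (local max, larger root).

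The necessary condition on the SNR then follows by requiring the discriminant to be non-negative: combined with $\gamma > 0$, this forces $\gamma \ge 3 + 2\sqrt{2}$, so that below this threshold $U'$ is monotone and no coexistence is possible. For SNR above the threshold, the bounds~\eqref{eq:2.15}--\eqref{eq:2.16} follow from the pointwise inequality $\Upsilon_\beta \le U'$: specifically, $\beta_\star = \min_\eta \Upsilon_\beta \le \min_\eta U' = U'(\eta_{m'})$, and the local maximum $\beta^\star$ of $\Upsilon_\beta$ on $(\eta_{m'},1)$ is bounded by $\max_{\eta\in[\eta_{m'},1]} U'(\eta) = U'(\eta_{M'})$. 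The regions $\mathcal{R}_b = (0,\eta_{m'}]$ and $\mathcal{R}_g = [\eta_{M'},1]$ then follow exactly as in the $\alpha \in (0,1)$ case of Theorem~\ref{thm4}.

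The main obstacle, compared with Theorem~\ref{thm4}, is that the MMSE lower bound in~\eqref{eq:2.6} degenerates to zero at $\alpha = 1$, so no lower bound on $\Upsilon_\beta$ of matching functional form is available. Consequently, only upper bounds on the spinodal loads can be derived (no lower bounds, and hence no explicit subregions of single solutions), and one must argue that the asymptotic shape of $U'$ faithfully captures that of $\Upsilon_\beta$ so that $\eta_{m'}$ and $\eta_{M'}$ genuinely delimit the coexistence region in the high-SNR regime.
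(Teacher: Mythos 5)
Your proof follows essentially the same route as the paper: the paper's proof of this corollary is just the argument of Theorem~\ref{thm4} applied to $G(\eta)=(1-\eta)e^{u\eta}/\sqrt{\eta}$ with $u=\gamma/2$, which is exactly your logarithmic-derivative analysis of $U'$ — same quadratic $\gamma\eta^2-(\gamma-1)\eta+1=0$, same $\Lambda(\gamma)=\tfrac{1}{2}\sqrt{\gamma^2-6\gamma+1}$, same threshold, and the same (paper-level) hand-wave that the shape of the bound transfers to $\Upsilon_\beta$. One minor slip: a nonnegative discriminant together with $\gamma>0$ does not by itself exclude $\gamma\le 3-2\sqrt{2}$; you also need that for such $\gamma$ both roots are negative (their sum is $(\gamma-1)/\gamma<0$ while their product is $1/\gamma>0$), so no critical point lies in $(0,1]$, after which the stated necessary condition $\gamma\ge 3+2\sqrt{2}$ stands.
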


\begin{proof}
The proof is analogous to that of Theorem \ref{thm4}.
\end{proof}

The same consequence for the asymptotic operational region holds
here.
\begin{cor}
In the limit of high SNR, $\mathcal{R}_{g}\to \{1\}$, and $\mathcal{R}_{b}\to \{0\}$.
\end{cor}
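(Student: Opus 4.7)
The plan is to mirror exactly the template of the earlier corollary (for $\alpha \in (0,1)$): compute the limits of the endpoints $\eta_{m'}$ and $\eta_{M'}$ of the regions $\mathcal{R}_b$ and $\mathcal{R}_g$ as $\gamma \to \infty$, and conclude that these intervals collapse to the singletons $\{0\}$ and $\{1\}$ respectively. The statement is purely a consequence of the explicit algebraic forms of $\eta_{m'}$, $\eta_{M'}$, and $\Lambda(\gamma)$ supplied in the preceding corollary, so no new multiuser-efficiency analysis is required.

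First, I would expand $\Lambda(\gamma) = \sqrt{(\gamma/2)^2 - 3\gamma/2 + 1/4}$ for large $\gamma$. Factoring $\gamma/2$ out of the square root gives $\Lambda(\gamma) = (\gamma/2)\sqrt{1 - 6/\gamma + 1/\gamma^2}$, so $\Lambda(\gamma) = \gamma/2 - 3/2 + O(1/\gamma)$. Substituting this into the defining expressions yields
\begin{equation*}
\lim_{\gamma\to\infty} \eta_{M'} = \lim_{\gamma\to\infty} \frac{\gamma/2 - 1/2 + \Lambda(\gamma)}{\gamma} = 1, \qquad \lim_{\gamma\to\infty} \eta_{m'} = \lim_{\gamma\to\infty} \frac{\gamma/2 - 1/2 - \Lambda(\gamma)}{\gamma} = 0.
\end{equation*}

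Then, since the previous corollary identifies $\mathcal{R}_g = [\eta_{M'}, 1]$ and $\mathcal{R}_b = (0, \eta_{m'}]$, the first interval shrinks from the left to $\{1\}$ and the second shrinks from the right to $\{0\}$, establishing both claims.

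There is essentially no obstacle here: the argument is purely algebraic once the endpoint formulas of the preceding corollary are in hand, and it is structurally identical to the $\alpha \in (0,1)$ version (only the constants inside the square root differ, reflecting the factor-of-$4$ discrepancy in the exponential decay of the MMSE between $\alpha=1$ and $\alpha < 1$ noted after Theorem~\ref{thm2}). The only minor subtlety worth flagging is that the expressions for $\eta_{m'}$ and $\eta_{M'}$ require $\Lambda(\gamma)$ to be real, i.e.\ the phase-coexistence condition $\gamma \geq 3+2\sqrt{2}$; since we are taking $\gamma \to \infty$, this is automatic.
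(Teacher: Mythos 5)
Your proposal is correct and follows essentially the same route as the paper: the paper's proof consists precisely of computing $\lim_{\gamma\to\infty}\eta_{M'}=1$ and $\lim_{\gamma\to\infty}\eta_{m'}=0$ and concluding that $\mathcal{R}_{g}=[\eta_{M'},1]$ and $\mathcal{R}_{b}=(0,\eta_{m'}]$ collapse to $\{1\}$ and $\{0\}$. Your explicit expansion $\Lambda(\gamma)=\gamma/2-3/2+O(1/\gamma)$ and the remark that $\gamma\geq 3+2\sqrt{2}$ holds automatically are just slightly more detailed versions of what the paper leaves implicit.
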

\begin{proof}
This corollary results from
\begin{eqnarray*}
\lim_{\gamma\to\infty}\eta_M=\lim_{\gamma\to\infty}\frac{(\gamma/2-1/2+\sqrt{(\gamma/2)^2-3\gamma/2+1/4})}{\gamma}=1 \\
\lim_{\gamma\to\infty}\eta_m=\lim_{\gamma\to\infty}\frac{(\gamma/2-1/2-\sqrt{(\gamma/2)^2-3\gamma/2+1/4})}{\gamma}=0
\end{eqnarray*}
\end{proof}
\begin{figure} [!htbp]
         \begin{center}
         \includegraphics[width=0.9\columnwidth]{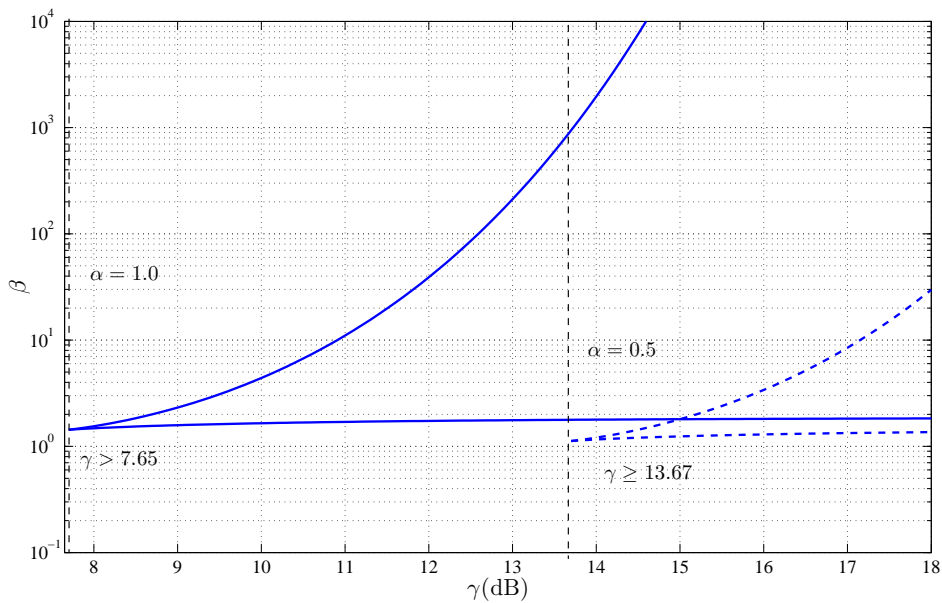}
          \caption{\label{fig7}
            {Comparison of upper bounds on the spinodal lines for $\alpha=1.0$ (left) and $\alpha=0.5$ (right).}}
                       \vspace{-7mm}

            \end{center}
\end{figure}

In Fig.~\ref{fig7}, one can observe a $6$~dB-difference between the
spinodal lines corresponding to $\alpha=0.5$ and to $\alpha=1.0$.
This is due to the minimum distance of the underlying constellations, 
which causes the MMSE to have different exponential decays. This
can be interpreted by saying that the addition of activity detection
to data detection is reflected by a $6$~dB increase of the SNR
needed to achieve the same system load performance. Moreover, with
$\alpha <1$, the transition system load is lower than the case where
all users are active, and, therefore, computationally good solutions
correspond to lower values of the maximum system load.

\subsection{Maximum system load with error probability constraints}
A natural application of the above results to practical designs
appears when the quality-of-service requirements of the system are
specified in terms of uncoded error probability. Such an
application can provide some extra insight into the plausible
values of $\beta$ with joint activity and data detection for
efficient design of large CDMA systems. Once a
multiuser-efficiency requirement is assigned, the corresponding
probability of error follows naturally. Note first that, in order
to detect the activity as well as the transmitted data, our model
deals with a ternary constellation $\{-1,0,1\}$. When any of these
symbols is transmitted by each user with constant SNR$=\gamma$
through a bank of large-system equivalent white Gaussian noise channels with
variance $1/\eta$, the probability of error over $\Xc$
depends on the prior probabilities as well as the Euclidean
distance between the symbols. The error probability implied by the replica analysis is
\begin{equation}\label{eq:2.17}
P_e(\eta,\gamma,\alpha)=2(1-\alpha)Q\left(\frac{\sqrt{\eta\gamma}}{2}+\frac{\lambda_{\alpha}}
{\sqrt{\eta\gamma}}\right)+\alpha
Q\left(\frac{\sqrt{\eta\gamma}}{2}-\frac{\lambda_{\alpha}}
{\sqrt{\eta\gamma}}\right)
\end{equation}
where $Q(x)\triangleq
\frac{1}{\sqrt{2\pi}}\int_{x}^{\infty}e^{-\frac{t^2}{2}}dt$ is the
Gaussian tail function, and
$\lambda_{\alpha}\triangleq\ln\left(\frac{2(1-\alpha)}{\alpha}\right)$.

The relationship between $\eta$ and $P_e$ for our particular case
can be used to reformulate the bounds on the function
$\Upsilon_{\beta}$ in terms of error probability.
\begin{cor}
The maximum system load, $\Upsilon_{\beta}(\eta,\gamma,\alpha)$, for
a given error probability $P_e$, $\gamma$, and activity rate is bounded for high SNR by:
\begin{equation}\label{eq:2.18}
L(\gamma,\eta_{\textrm{max}},\alpha)<\Upsilon_{\beta}(\eta_{\textrm{max}},\gamma,\alpha)<U(\gamma,\eta_{\textrm{max}},\alpha)
\end{equation}
where
\begin{equation*}
\eta_{\textrm{max}}\triangleq\max\left\{\eta_P,\eta_{gc}\right\},
\end{equation*}
and $(\eta_P, \gamma, \alpha)$ is the pre-image of $P_e$.  

\end{cor}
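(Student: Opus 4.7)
The plan is to reduce this corollary to a direct evaluation of the bounds in Theorem~\ref{thm2} at a carefully chosen efficiency $\eta_{\max}$, where the choice of $\eta_{\max}$ encodes the two operational constraints: achieving the target $P_e$ and staying in the single good-solution subregion $R_{gc}$.

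First I would translate the error-probability requirement into a constraint on $\eta$. From~\eqref{eq:2.17}, writing $x(\eta) \triangleq \sqrt{\eta\gamma}/2$ and noting that $\lambda_\alpha$ is a fixed constant, I would verify that for $\eta\gamma > 2|\lambda_\alpha|$ both arguments $x(\eta)\pm\lambda_\alpha/(2x(\eta))$ are strictly increasing in $\eta$. Since $Q(\cdot)$ is strictly decreasing, the mapping $\eta \mapsto P_e(\eta,\gamma,\alpha)$ is therefore strictly decreasing in the high-SNR regime, and the pre-image $\eta_P$ is uniquely defined; any efficiency meeting the $P_e$ target must satisfy $\eta \geq \eta_P$.

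Next I would invoke Theorem~\ref{thm4} to impose the computational-feasibility constraint: the operational solution lies in $R_{gc} = (\eta_{gc},1]$, so that it is unique and globally stable, precisely when $\eta > \eta_{gc}$. Combining both constraints yields $\eta \geq \max(\eta_P,\eta_{gc}) = \eta_{\max}$. I would then argue that on $\mathcal{R}_g$ the function $\Upsilon_{\beta}(\gamma,\eta,\alpha)$ is monotonically decreasing in $\eta$, so that the maximum load admissible under both constraints is attained exactly at $\eta = \eta_{\max}$. This monotonicity is intuitively clear from Fig.~\ref{fig5b} and follows from the $(1-\eta)$ factor in the numerator of~\eqref{eq:2.9} together with the fact that $\eta\gamma\,\textrm{MMSE}(\eta\gamma,\alpha)$ is non-decreasing in $\eta$ once $\eta\gamma$ is past the small-MMSE threshold, as can be read off from the sandwich provided by Theorem~\ref{thm2}.

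The final step is to substitute $\eta = \eta_{\max}$ into the MMSE sandwich of Theorem~\ref{thm2} and plug the result into the definition~\eqref{eq:2.9}, which by~\eqref{eqaux1}--\eqref{eqaux2} immediately gives $L(\gamma,\eta_{\max},\alpha) < \Upsilon_{\beta}(\eta_{\max},\gamma,\alpha) < U(\gamma,\eta_{\max},\alpha)$. The main obstacle I anticipate is not this substitution but rather the monotonicity claim for $\Upsilon_{\beta}$ on $\mathcal{R}_g$: making it rigorous for the true (not merely bounded) MMSE requires either a continuity-plus-pinning argument in the spirit of the proof of Theorem~\ref{thm4}, or a direct derivative computation exploiting the Guo--Verd\'u identity between MMSE and mutual information to control the sign of $\partial_\eta[\eta\gamma\,\textrm{MMSE}]$. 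Once the monotonicity is secured, the remainder of the argument is essentially a bookkeeping composition of Theorem~\ref{thm2} with the definition of $\eta_{\max}$.
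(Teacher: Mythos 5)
Your overall route coincides with the paper's: the paper's proof consists exactly of the observation that the efficiency requirement extracted from $P_e$ must lie in the single-good-solution subregion $(\eta_{gc},1]$, so the binding constraint is $\eta_{\textrm{max}}=\max\{\eta_P,\eta_{gc}\}$, and the displayed bounds are then the high-SNR sandwich $L\leq\Upsilon_{\beta}\leq U$ of \eqref{eqaux1}--\eqref{eqaux2} (itself a direct consequence of Theorem~\ref{thm2}) evaluated at $\eta=\eta_{\textrm{max}}$. Your monotonicity argument for $P_e$ in $\eta$, the combination of the two constraints, and the final substitution reproduce this, with more detail than the paper provides.

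One concrete flaw: the justification you offer for the monotonicity of $\Upsilon_{\beta}$ on the good region rests on a false premise. At high effective SNR, Theorem~\ref{thm2} gives $\textrm{MMSE}\asymp\sqrt{\alpha(1-\alpha)/(\eta\gamma)}\,e^{-\eta\gamma/8}$, so $\eta\gamma\,\textrm{MMSE}\asymp\sqrt{\eta\gamma}\,e^{-\eta\gamma/8}$ is \emph{decreasing} in $\eta$ (for $\eta\gamma>4$), not non-decreasing; thus $\Upsilon_{\beta}$ is a ratio of a decreasing numerator and a decreasing denominator, and the sign of its derivative cannot be read off from your decomposition --- indeed on $[\eta_m,\eta_M]$ the function is increasing despite the factor $(1-\eta)$. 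The correct reason it decreases on $(\eta_{gc},1]$ is the shape analysis of $G(\eta)=(1-\eta)e^{u\eta}/\sqrt{\eta}$ in Appendix~\ref{app:beta_proof}: $G$ has its unique local maximum at $\eta_M$ and decays to $0$ on $[\eta_M,1]$, and $R_{gc}=(\eta_{gc},1]\subset\mathcal{R}_g=[\eta_M,1]$, so the bounding functions (and, at the paper's level of rigor, $\Upsilon_{\beta}$ itself) are decreasing there. Note, however, that this step only supports the interpretive claim that the maximum admissible load is attained at $\eta_{\textrm{max}}$; the inequality \eqref{eq:2.18} itself requires nothing beyond evaluating the sandwich at $\eta_{\textrm{max}}$, which you do correctly, and the paper's own proof likewise consists only of the one-line identification of $\eta_{\textrm{max}}$.
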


\begin{proof}
The result is obtained by noticing that the multiuser efficiency
requirement extracted from $P_e$ must lie on the subregion $(\eta_{gc},1]$.
\end{proof}

\bigskip
Notice that, if the error probability satisfies
$\eta_{p}\leq\eta_{gc}=\eta_{\textrm{max}}$, then the constraint is
described by the bounds on the transition load \eqref{eq:2.12}. However, if
$\eta_{gc}<\eta_{p}=\eta_{\textrm{max}}$, then, for
Corollary~\ref{cor1}, the maximum system load can be also easily
bounded. Fig.~\ref{fig8} plots the critical system load for
two different error probabilities requirements and three
different activity rates.

\begin{figure} [!htbp]
         \begin{center}
         \includegraphics[width=0.9\columnwidth]{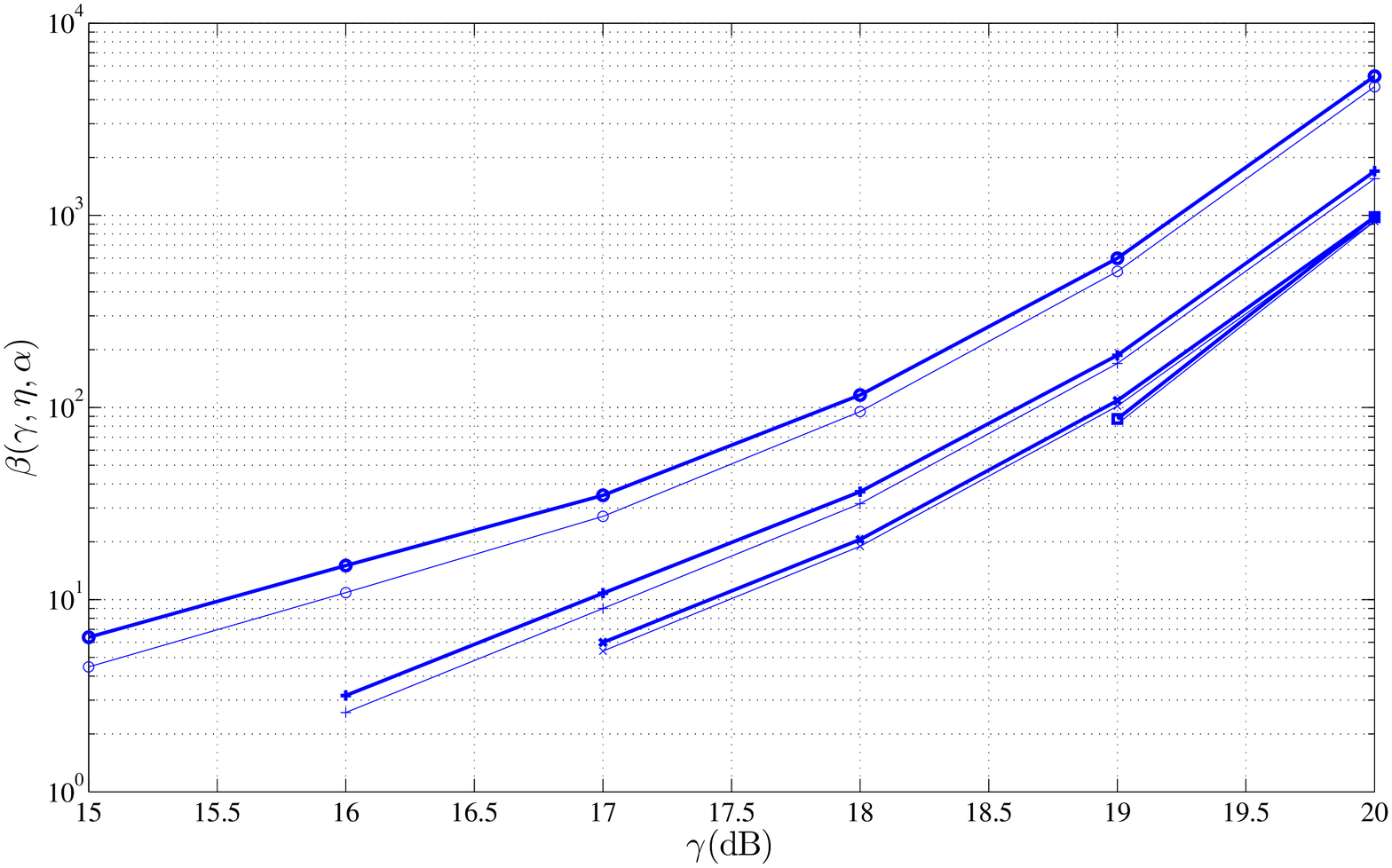}
          \caption{\label{fig8}
            {Critical system load for different uncoded error probabilities and activity rates. Thicker lines represent
            numerical results, whereas regular lines show the corresponding lower bounds.
            Lines with circle markers:
            $P_e=10^{-3}$ and $\alpha=0.99$. Lines with cross markers:
            $P_e=10^{-3}$ and
            $\alpha=0.1$. Lines with star markers:
            $P_e=10^{-3}$ and $\alpha=0.5$. Lines with square markers: error probability
            $10^{-5}$ and $\alpha=0.5$. }}
            \end{center}
                       \vspace{-7mm}

\end{figure}

\section{Conclusions}
\label{section:conclusions}
%
We have analyzed multiuser detectors for CDMA where the fraction
of active users is unknown, and must be estimated in a tracking
phase. Using a large-system approach and statistical-physics
tools, we have derived a fixed-point equation for the optimal
user-and-data detector, and provided asymptotic bounds for the
corresponding MMSE. Further, we have described the space of
stable solutions of the fixed-point equation, and derived explicit
bounds on the critical and transition system loads for all users'
activity rates. These are consistent with the results obtained
under the classic multiuser-detection assumption ($\alpha=1.0$)
made in the literature. The study of the so-called spinodal
lines allowed us to determine the regions of stable good and bad
solutions, including subregions of single solutions (also
computationally feasible), in the system load vs. SNR parameter space
of our model. Our results show that for a user-and-data detector,
the boundaries of the space of solutions do depend on the activity
rate, whereas the regions of stable solutions (good and bad) are
only affected by the SNR. Hence, the overall system load
performance keeps a symmetric behavior with respect to $\alpha$.
In practical implementations with high quality-of-service demands,
we are interested in maximizing the critical system load, while
keeping the optimal detector in the feasible subregion of good
multiuser efficiencies, such that a a wider range of potential users
can successfully access the channel with a given rate. By
increasing the SNR, this goal can be achieved, but for limited
SNR, the certainty on the users' activity allows allocation of
more users for a given spreading length. A relevant example
corresponds to a system with a given error
probability requirement. For this case, we have shown that, for sufficiently
large SNR, we can choose the minimum multiuser efficiency in the
domain of feasible good solutions, and maximize the critical
system load regardless of the error probability target.

One of the assumptions of this paper is to model the activity as an i.i.d. process.
Further extensions of this work to non-i.i.d scenarios can be found in \cite{Tau082} where
the users' activity evolves according to a Markov process and
in \cite{TauGui10}, where users transmit encoded messages and
the activity is correlated over the coded blocks of each user.

\vspace{-3mm}

\appendices
\section{Proof of Corollary \ref{thm1}}
\label{app:proof_mmse}

We first derive the MMSE for our particular ternary input distribution in
the general fixed-point equation \eqref{eq:1.13}:
\begin{align*}
\textrm{MMSE}(\eta\gamma,\alpha)&=\EE\left[\biggl(\b^k-\EE\{\b^k|\S,y\}\biggl)^{2}\right]\\
&=\alpha- \int\frac{\EE_{\b^k}^2\{\b^kP(y|\eta,
\b^k, S)\}}{\EE_{\b^k}\{P(y|\eta, \b^k,
S)\}}\textrm{d}y\\
&=\alpha-\int\frac{\left(\sqrt{\frac{\eta}{2\pi}}\frac{\alpha}{2}\left[e^{-\frac{\eta}{2}(y-\sqrt{\gamma})^2}-
e^{-\frac{\eta}{2}(y+\sqrt{\gamma})^2}\right]\right)^2}{\sqrt{\frac{\eta}{2\pi}}\left(\frac{\alpha}{2}
\left[e^{-\frac{\eta}{2}(y-\sqrt{\gamma})^2}+
e^{-\frac{\eta}{2}(y+\sqrt{\gamma})^2}\right]\right)+(1-\alpha)e^{\frac{\eta}{2}y^2}}\textrm{d}y\\
&=\alpha-\frac{1}{2}\sqrt{\frac{\eta}{2\pi}}\left[\int\frac{e^{-\frac{\eta}{2}(y-\sqrt{\gamma})^2}
\alpha^{2}\sinh(\eta y\sqrt{\gamma})}{\alpha\cosh(\eta y
\sqrt{\gamma})+(1-\alpha)e^{\frac{\eta \gamma}{2}}}\textrm{d}y-
\int\frac{e^{-\frac{\eta}{2}(y'+\sqrt{\gamma})^2}
\alpha^{2}\sinh(\eta y'\sqrt{\gamma})}{\alpha\cosh(\eta y'
\sqrt{\gamma})+(1-\alpha)e^{\frac{\eta \gamma}{2}}}\textrm{d}y'\right].
\end{align*}

After appropriate change of variables 

%
the MMSE is
\begin{align*}
\rm{MMSE}(\eta\gamma,\alpha)
&=\alpha-\int\frac{1}{\sqrt{2\pi}}e^{\frac{-y^2}{2}}\frac{\alpha^2\sinh[\eta
\gamma-y\sqrt{\eta\gamma}]}{\alpha\cosh[\eta\gamma-y\sqrt{\eta
\gamma}]+(1-\alpha)e^{\eta\gamma/2}}\textrm{d}y
\end{align*}
Since the SNR is constant among users, it follows naturally that
the large-system fixed-point equation is given by \eqref{eq:2.3}.

\section{Proof of Theorem \ref{thm2}}
\label{app:proof_mmse_limits}

From now on, we omit the explicit indication
of the arguments of the MMSE function. The lower bound is obtained
by noting that, for large SNR, the general MMSE, denoted
$\textrm{MMSE}_{\alpha}$, is lower-bounded by
$\textrm{MMSE}_{\{0,1\}}$, which describes the detection
performance when the transmitted symbols are $\{0,1\}$, with
probabilities $\{1-\alpha, \alpha\}$. In this case the MMSE has
the following form
\begin{align*}
\textrm{MMSE}_{\{0,1\}}&=\alpha-\int\frac{\left(\sqrt{\frac{\eta}{2\pi}}\alpha
e^{-\frac{\eta}{2}(y-\sqrt{\gamma})^2}\right)^2}
{\sqrt{\frac{\eta}{2\pi}}\left(\alpha
e^{-\frac{\eta}{2}(y-\sqrt{\gamma})^2}+
(1-\alpha)e^{-\frac{\eta}{2}y^2}\right)}\textrm{d}y
\end{align*}
where
$\lambda_{\alpha}\triangleq\frac{1}{2}\ln\left(\frac{1-\alpha}{\alpha}\right)$.
After some manipulation and appropriate changes of variables, 
we obtain
\begin{align*}
\textrm{MMSE}_{\{0,1\}}&=\alpha-\alpha
e^{-\frac{3\eta\gamma}{8}-\lambda_{\alpha}}\sqrt{\frac{1}{2\pi}}\int_{-\infty}^{\infty}
e^{-\frac{1}{2}(2z-\sqrt{\eta\gamma}+\frac{2\lambda_{\alpha}}{\sqrt{\eta\gamma}})^2}\sech\left(
z\sqrt{\eta\gamma}\right)\textrm{d}z.
\end{align*}
Use now the following asymptotic expansion for $\sech(z)$:
$\abs{z}\to\infty$
\begin{equation*}
\sech(z)=
2e^{-\abs{z}}\left(1+\sum_{\ell=1}^{\infty}(-1)^{\ell}e^{-2\ell\abs{z}}\right)
\end{equation*}
and obtain
\begin{align*}
\textrm{MMSE}_{\{0,1\}}=\alpha-\alpha
e^{-\frac{3\eta\gamma}{8}-\lambda_{\alpha}}\sqrt{\frac{1}{2\pi}}\Biggl[&\int_{-\infty}^{0}
e^{-\frac{1}{2}(2z-\sqrt{\eta\gamma}+\frac{2\lambda_{\alpha}}{\sqrt{\eta\gamma}})^2}
2e^{z\sqrt{\eta\gamma}}\left(1+\sum_{\ell=1}^{\infty}(-1)^{\ell}e^{2\ell
z\sqrt{\eta\gamma}}\right)\textrm{d}z\\
+&\int_{0}^{\infty}e^{-\frac{1}{2}(2z-\sqrt{\eta\gamma}+\frac{2\lambda_{\alpha}}{\sqrt{\eta\gamma}})^2}
2e^{-z\sqrt{\eta\gamma}}\left(1+\sum_{\ell=1}^{\infty}(-1)^{\ell}e^{-2\ell
z\sqrt{\eta\gamma}}\right)\textrm{d}z\Biggl].
\end{align*}

We now use the expansion
\begin{align*}
\textrm{MMSE}_{\{0,1\}}=\alpha-\alpha
e^{-\eta\gamma/8+\lambda_{\alpha}-\frac{2\lambda^2_{\alpha}}{\eta\gamma}}\Biggl[&\sum_{\ell=0}^{\infty}(-1)^{\ell}\sqrt{\frac{2}{\pi}}\int_{-\infty}^{0}
e^{-2z^2+\left((3+2\ell)\sqrt{\eta\gamma}-\frac{4\lambda_{\alpha}}{\sqrt{\eta\gamma}}\right)z}\textrm{d}z\\
+&\sum_{\ell=0}^{\infty}(-1)^{\ell}\sqrt{\frac{2}{\pi}}\int_{0}^{\infty}e^{-2z^2+\left((1-2\ell)\sqrt{\eta\gamma}-\frac{4\lambda_{\alpha}}{\sqrt{\eta\gamma}}\right)z}
\textrm{d}z\Biggl].
\end{align*}
Express now the integrals in terms of the Q-function,
$Q(x)\triangleq\frac{1}{\sqrt{2\pi}}\int_{x}^{\infty}e^{-\frac{t^2}{2}}dt$
\begin{align*}
\textrm{MMSE}_{\{0,1\}}=\alpha
e^{-\eta\gamma/8+\lambda_{\alpha}-\frac{2\lambda^2_{\alpha}}{\eta\gamma}}
&\Biggl[
e^{\frac{\left(\frac{\sqrt{\eta\gamma}}{2}-\frac{2\lambda_{\alpha}}{\sqrt{\eta\gamma}}\right)^{2}}{2}}Q\left(\frac{\sqrt{\eta\gamma}}{2}-\frac{2\lambda_{\alpha}}{\sqrt{\eta\gamma}}\right)
-e^{\frac{\left(\frac{3\sqrt{\eta\gamma}}{2}-\frac{2\lambda_{\alpha}}{\sqrt{\eta\gamma}}\right)^{2}}{2}}Q\left(\frac{3\sqrt{\eta\gamma}}{2}-\frac{2\lambda_{\alpha}}{\sqrt{\eta\gamma}}\right)\\
&-\sum_{\ell=1}^{\infty}(-1)^{\ell}\biggl[
e^{\frac{\left(\frac{(2\ell-1)\sqrt{\eta\gamma}}{2}+\frac{2\lambda_{\alpha}}{\sqrt{\eta\gamma}}\right)^{2}}{2}}
Q\left(\frac{(2\ell-1)\sqrt{\eta\gamma}}{2}+\frac{2\lambda_{\alpha}}{\sqrt{\eta\gamma}}\right)\\
&+e^{\frac{\left(\frac{(3+2\ell)\sqrt{\eta\gamma}}{2}-\frac{2\lambda_{\alpha}}{\sqrt{\eta\gamma}}\right)^{2}}{2}}
\left(Q\left(\frac{(3+2\ell)\sqrt{\eta\gamma}}{2}-\frac{2\lambda_{\alpha}}{\sqrt{\eta\gamma}}\right)\right)\biggl]\Biggl].
\end{align*}
Next, use the expansion of the Q function, \cite{Ver98}:
\begin{equation}\label{ap1}
Q(x)=\frac{e^{-x^2/2}}{\sqrt{2\pi}x}\left(1+\sum_{\ell=1}^{\infty}(-1)^{\ell}
\frac{\prod_{q=1}^{\ell}(2q-1)}{x^{2\ell}}\right)
\end{equation}
to obtain
\begin{align*}
\textrm{MMSE}_{\{0,1\}}=2\sqrt{\frac{\alpha(1-\alpha)}{2\pi\eta\gamma}}
e^{-\eta\gamma/8-\frac{2\lambda^2_{\alpha}}{\eta\gamma}}
\Biggl[&\frac{1}{\left(1-\frac{4\lambda_{\alpha}}{\eta\gamma}\right)}
+\frac{1}{\left(1+\frac{4\lambda_{\alpha}}{\eta\gamma}\right)}-
\frac{1}{\left(3-\frac{4\lambda_{\alpha}}{\eta\gamma}\right)}-
\frac{1}{\left(3+\frac{4\lambda_{\alpha}}{\eta\gamma}\right)}\\
&+\dots +\mathcal{O}\left(\frac{1}{\sqrt{\eta\gamma}}\right)\Biggl]
\end{align*}
where the linear term in $\lambda_{\alpha}$ is substituted in the
common factor. Assuming a large value $\eta\gamma$, and using the
result
\begin{align*}
2\sum_{n=0}^{\infty}\frac{(-1)^{n+1}}{2n+1}=\frac{\pi}{2}
\end{align*}
we obtain the following lower bound
\begin{align*}
\textrm{MMSE}_{\{0,1\}}>2\sqrt{\frac{\alpha(1-\alpha)}{2\pi\eta\gamma}}
e^{-\eta\gamma/8}& \sqrt{2}
=2\sqrt{\frac{\alpha(1-\alpha)}{\pi\eta\gamma}}
e^{-\eta\gamma/8}.
\end{align*}


As far as the upper-bound is concerned, we derive the general MMSE,
denoted $\textrm{MMSE}_{\alpha}$, and its particular case
when all users are assumed to be active, denoted $\textrm{MMSE}_{1}$.
Hence, we express the corresponding integrals in an analogous
manner
\begin{eqnarray*}
\zeta_{\alpha}&=&\int\frac{1}{\sqrt{2\pi}}e^{\frac{-(y-\sqrt{\eta\gamma})^2}{2}}\frac{\alpha\sinh(y\sqrt{\eta\gamma})}{\alpha\cosh(y\sqrt{\eta
\gamma})+(1-\alpha)e^{\eta\frac{\gamma}{2}}}\textrm{d}y.\\
\zeta_{1}&=&\int\frac{1}{\sqrt{2\pi}}e^{\frac{-(y-\sqrt{\eta\gamma})^2}{2}}\tanh(y\sqrt{\eta\gamma})\textrm{d}y.
\end{eqnarray*}
We now obtain
\begin{align*}
\textrm{MMSE}_{\alpha}&=\alpha(1-\zeta_{\alpha})=\alpha\left(1+\zeta_{1}-\zeta_{1}-\zeta_{\alpha}\right)=\alpha\left(
(1-\zeta_{1})+(\zeta_{1}-\zeta_{\alpha})\right)\\
&=\alpha\left(\textrm{MMSE}_{1}+(\zeta_{1}-\zeta_{\alpha})\right)
\end{align*}
Next, we expand $\alpha(\zeta_{1}-\zeta_{\alpha})$, which yields
\begin{align}
\alpha(\zeta_{1}-\zeta_{\alpha})&=(1-\alpha)e^{\eta\gamma/2}
\int\frac{1}{\sqrt{2\pi}}e^{\frac{-(y-\sqrt{\eta\gamma})^2}{2}}\frac{\alpha\sinh[y\sqrt{\eta\gamma}]}{\alpha\cosh^2[y\sqrt{\eta
\gamma}]+(1-\alpha)e^{\eta\gamma/2}\cosh(y\sqrt{\eta
\gamma})}\textrm{d}y\notag\\
&=(1-\alpha)e^{\eta\gamma/2}\int\frac{1}{\sqrt{2\pi}}e^{\frac{-(y-\sqrt{\eta\gamma})^2}{2}}\frac{\tanh[y\sqrt{\eta\gamma}]}{\cosh[y\sqrt{\eta
\gamma}]+e^{\eta\gamma/2+\ln\left(\frac{1-\alpha}{\alpha}\right)}}\textrm{d}y\label{eq:apex}
\end{align}
Consider now the following inequalities
\begin{align*}
\tanh(z)\leq 1 \quad \text{and} \quad 
\cosh(z)\geq\frac{e^{z}}{2}
\end{align*}
After substitution and manipulation of the denominator of \eqref{eq:apex}, we obtain
\begin{align*}
\alpha\left(\zeta_{1}-\zeta_{\alpha}\right)\leq &
(1-\alpha)e^{\eta\gamma/2}\int_{-\infty}^{\infty}\frac{e^{\frac{-(y-\sqrt{\eta\gamma})^2}{2}}}{\sqrt{2\pi}}
\frac{1}{\frac{e^{y\sqrt{\eta
\gamma}}}{2}+e^{\eta\gamma/2+\ln\left(\frac{(1-\alpha)}{\alpha}\right)}}\textrm{d}y\\
=&(1-\alpha)e^{\eta\gamma/2}\int_{-\infty}^{\infty}\frac{e^{\frac{-(y-\sqrt{\eta\gamma})^2}{2}}}{\sqrt{2\pi}}
\frac{e^{-\frac{y\sqrt{\eta\gamma}}{2}-\frac{\eta\gamma}{4}-\phi_{\alpha}}}{\cosh\left(\frac
{y\sqrt{\eta\gamma}}{2}-\frac{\eta\gamma}{4}-\phi_{\alpha}\right)}\textrm{d}y,
\end{align*}
where
$\phi_{\alpha}\triangleq\frac{1}{2}\ln\left(\frac{2(1-\alpha)}{\alpha}\right)$.
We readjust terms to express the integral in a convenient form
\begin{equation*}
\alpha\left(\zeta_{1}-\zeta_{\alpha}\right) \leq
(1-\alpha)e^{-\eta\gamma/8-\phi_{\alpha}}\int_{-\infty}^{\infty}
\frac{e^{\frac{-\left(y-\frac{\sqrt{\eta\gamma}}{2}\right)^2}{2}}}{\sqrt{2\pi}}\sech\left(\frac
{y\sqrt{\eta\gamma}}{2}-\frac{\eta\gamma}{4}-\phi_{\alpha}\right)\textrm{d}y.
\end{equation*}
We now use the following transformation in the variable of
integration:
$y=2z+\frac{\sqrt{\eta\gamma}}{2}+\frac{2\phi_{\alpha}}{\sqrt{\eta\gamma}}$.
\begin{align*}
\alpha\left(\zeta_{1}-\zeta_{\alpha}\right) \leq
&2(1-\alpha)e^{\frac{7\eta\gamma}{8}-\phi_{\alpha}}\int_{-\infty}^{0}\frac{e^{\frac{-\left(2
\hat
z-2\sqrt{\eta\gamma}-\frac{2\phi_{\alpha}}{\sqrt{\eta\gamma}}\right)^2}{2}}}{\sqrt{2\pi}}\sech\left(
z\sqrt{\eta\gamma}\right)\textrm{d}\hat z\\
+&2(1-\alpha)e^{-\eta\gamma/8-\phi_{\alpha}}\int_{0}^{\infty}\frac{e^{\frac{-\left(2z+\frac{2\phi_{\alpha}}
{\sqrt{\eta\gamma}}\right)^2}{2}}}{\sqrt{2\pi}}\sech\left(
z\sqrt{\eta\gamma}\right)\textrm{d}z
\end{align*}
and the asymptotic expansion for $\sech(z)$ in the above
derivation
\begin{equation*}
\sech(z)=
2e^{-\abs{z}}\left(1+\sum_{\ell=1}^{\infty}(-1)^{\ell}e^{-2\ell\abs{z}}\right).
\end{equation*}
This yields
\begin{align*}
\alpha\left(\zeta_{1}-\zeta_{\alpha}\right)\leq
&4(1-\alpha)e^{-\eta\gamma/8-\phi_{\alpha}-\frac{2\phi^2_{\alpha}}{\eta\gamma}}
\sum_{\ell=0}^{\infty}(-1)^{\ell}\int_{0}^{\infty} \frac{e^{-2\hat
z^2+\left((1+2\ell)\sqrt{\eta\gamma}-\frac{4\phi_{\alpha}}{\sqrt{\eta\gamma}}\right)\hat
z}}{\sqrt{2\pi}}\textrm{d}z\\
&+4(1-\alpha)e^{-\eta\gamma/8-\phi_{\alpha}-\frac{2\phi^2_{\alpha}}{\eta\gamma}}
\sum_{\ell=0}^{\infty}(-1)^{\ell}\int_{0}^{\infty} \frac{e^{-2\hat
z^2-\left((1+2\ell)\sqrt{\eta\gamma}+\frac{4\phi_{\alpha}}{\sqrt{\eta\gamma}}\right)\hat
z}}{\sqrt{2\pi}}\textrm{d}z
\end{align*}
Finally, expressing the integrals in terms of the Q function
\begin{align*}
\alpha\left(\zeta_{1}-\zeta_{\alpha}\right)\leq
2(1-\alpha)e^{-\eta\gamma/8-\phi_{\alpha}-\frac{2\phi^2_{\alpha}}{\eta\gamma}}
\Biggl[&\sum_{\ell=0}^{\infty}(-1)^{\ell}e^{\frac{\left(\frac{(1+2\ell)\sqrt{\eta\gamma}}{2}
-\frac{2\phi_{\alpha}}{\sqrt{\eta\gamma}}^2\right)^2}{2}} Q\left(
\frac{\left(1+2\ell\right)\sqrt{\eta\gamma}}{2}-\frac{2\phi_{\alpha}}{\sqrt{\eta\gamma}}\right)\\
+&\sum_{\ell=0}^{\infty}(-1)^{\ell}e^{\frac{\left(\frac{(1+2\ell)\sqrt{\eta\gamma}}{2}
+\frac{2\phi_{\alpha}}{\sqrt{\eta\gamma}}\right)^2}{2}}Q\left(
\frac{(1+2\ell)\sqrt{\eta\gamma}}{2}+\frac{2\phi_{\alpha}}{\sqrt{\eta\gamma}}\right)\Biggl]
\end{align*}
and manipulating the expansion
\begin{align*}
\alpha\left(\zeta_{1}-\zeta_{\alpha}\right)\leq
&2(1-\alpha)e^{-\eta\gamma/8-\phi_{\alpha}-\frac{2\phi^2_{\alpha}}{\eta\gamma}}\Biggl[
e^{\frac{\left(\frac{\sqrt{\eta\gamma}}{2}-\frac{2\phi_{\alpha}}{\sqrt{\eta\gamma}}\right)^2}{2}}
Q\left(
\frac{\sqrt{\eta\gamma}}{2}-\frac{2\phi_{\alpha}}{\sqrt{\eta\gamma}}\right)\\
&+e^{\frac{\left(\frac{\sqrt{\eta\gamma}}{2}+\frac{2\phi_{\alpha}}{\sqrt{\eta\gamma}}\right)^2}{2}}
Q\left(
\frac{\sqrt{\eta\gamma}}{2}+\frac{2\phi_{\alpha}}{\sqrt{\eta\gamma}}\right)\\
&+\sum_{\ell=1}^{\infty}(-1)^{\ell}
e^{\frac{\left(\frac{(1+2\ell)\sqrt{\eta\gamma}}{2}-\frac{2\phi_{\alpha}}{\sqrt{\eta\gamma}}\right)^2}{2}}
Q\left(
\frac{(1+2\ell)\sqrt{\eta\gamma}}{2}-\frac{2\phi_{\alpha}}{\sqrt{\eta\gamma}}\right)\\
&+\sum_{\ell=1}^{\infty}(-1)^{\ell}
e^{\frac{\left(\frac{(1+2\ell)\sqrt{\eta\gamma}}{2}+\frac{2\phi_{\alpha}}{\sqrt{\eta\gamma}}\right)^2}{2}}
Q\left(
\frac{(1+2\ell)\sqrt{\eta\gamma}}{2}+\frac{2\phi_{\alpha}}{\sqrt{\eta\gamma}}\right)\Biggl],
\end{align*}
we obtain, after using the series expansion~\eqref{ap1},
\begin{align*}
\alpha\left(\zeta_{1}-\zeta_{\alpha}\right)\leq &
2\sqrt{\frac{(1-\alpha)\alpha}{\pi\eta\gamma}}
e^{-\eta\gamma/8}\Biggl[
\frac{1}{\left(1-\frac{4\phi_{\alpha}}{\eta\gamma}\right)}
+\frac{1}{\left(1+\frac{4\phi_{\alpha}}{\eta\gamma}\right)}-
\frac{1}{\left(3-\frac{4\phi_{\alpha}}{\eta\gamma}\right)}-
\frac{1}{\left(3+\frac{4\phi_{\alpha}}{\eta\gamma}\right)}\\
&+\dots +\mathcal{O}\left(\frac{1}{\sqrt{\eta\gamma}}\right)\Biggl],
\end{align*}
where the linear term in
$\phi_{\alpha}=\frac{1}{2}\ln\left(\frac{2(1-\alpha)}{\alpha}\right)$
is substituted in the common factor, and quadratic terms are
neglected.

Using the same result as before on the series
$2\sum_{n=0}^{\infty}\frac{(-1)^{n+1}}{2n+1}$, we obtain the
upper bound:
\begin{align*}
\alpha\left(\zeta_{1}-\zeta_{\alpha}\right)<\sqrt{\frac{\pi\alpha(1-\alpha)}
{\eta\gamma}}e^{-\eta\gamma/8}.
\end{align*}
Finally, using the upper bound given in Lemma \ref{lem1} for BPSK, the overall
MMSE can be upper-bounded by
\begin{align*}
\textrm{MMSE}_{\alpha}\leq 2\alpha e^{-\eta\gamma/2}+
\sqrt{\frac{\pi\alpha(1-\alpha)}{\eta\gamma}}e^{-\eta\gamma/8}.
\end{align*}

\section{Proof of Theorem \ref{thm4}}\label{app:beta_proof}
We analyze the function
\begin{equation}\label{eq:ap4}
 G(\eta)=(1-\eta)e^{u\eta}/\sqrt{\eta}
\end{equation}
where $u$ is a constant, which entirely describes the dependence
of $\Upsilon_\beta$ on $\eta$ for sufficiently large $\eta\gamma$. By
simple derivation of~\eqref{eq:ap4}, it is easy to observe that
the solution has critical points in the domain $(0,1]$ if and only
if $u\geq(3+\sqrt{8})/2$. These points are
\begin{eqnarray*}
\eta_m&=&\frac{u-1/2-\sqrt{u^2-3u+1/4}}{2u}\\
\eta_M&=&\frac{u-1/2+\sqrt{u^2-3u+1/4}}{2u}
\end{eqnarray*}
and lie in the domain $(0,1]$. 
In fact, note that $u^2-3u+1/4<(u-3/2)^2$, and thus it can be
verified that $0<1/2u<\eta_m<\eta_M<1-1/u<1$. By using the second
derivative of~\eqref{eq:ap4} we observe that these solutions
correspond to a local minimum and a local maximum, respectively.
Let us now study the function~\eqref{eq:ap4} to justify the range
of values of the critical system load. $G(\eta)$ is a continuous
function in $(0,1]$ that takes positive values. Since $G(\eta)$
tends to 0 as $\eta$ approaches 1, and tends to infinity as $\eta$
approaches 0, it can be concluded that the range for which
$G(\eta)$ has only one pre-image is
$(0,G(\eta_{m}))\cup(G(\eta_{M}),\infty)$. Hence, there are single
pre-images in the ranges $(0, \min \{G^{-1}(G(\eta_M))\})$ and
$(\max \{G^{-1}(G(\eta_m))\},1]$. For $G(\eta_{m})<G(.)<
G(\eta_{M})$, there are three pre-images and for 
$G(\eta_{m})$ and $G(\eta_{M})$ there are exactly two
 due to
the local minimum and maximum (See Fig. \ref{fig5b}). Then, the smallest pre-image among them lies
on $[\min \{G^{-1}(G(\eta_M))\},\eta_m]$ whereas the largest lies
on $[\eta_M, \max \{G^{-1}(G(\eta_m))\}]$. In conclusion, the
overall \emph{smallest} pre-images belong to
\begin{equation}
 \mathcal{R}_1=(0,\min \{G^{-1}(G(\eta_M))\}) \cup [\min \{G^{-1}(G(\eta_M))\},\eta_m]= (0,\eta_m],
\end{equation} 
whereas the \emph{largest} pre-images belong to 
\begin{equation*}
\mathcal{R}_2=[\eta_M, \max \{G^{-1}(G(\eta_m))\}]\cup [\max \{G^{-1}(G(\eta_m))\},1]=[\eta_M,1].
\end{equation*}
In particular,  
$\mathcal{R}_{12}\triangleq(0,\min
\{G^{-1}(G(\eta_M)\})\subset \mathcal{R}_1$ and $\mathcal{R}_{22}\triangleq(\max
\{G^{-1}(G(\eta_m))\},1]\subset\mathcal{R}_2$.
By bounding the MMSE using \eqref{eq:2.6} and
replacing $u=\gamma/8$, we obtain the desired results.

\bibliographystyle{IEEE}

\end{document}